\documentclass[sigconf]{acmart}
\newcommand{\myparatight}[1]{\smallskip\noindent{\bf {#1}:}~}

\usepackage{amsthm}
\usepackage{amsmath}
 
\usepackage{amssymb}
\usepackage{float}
\usepackage{graphics}
\usepackage{graphicx}
\usepackage[skip=0pt]{caption}
\usepackage{algorithm}
\usepackage{algpseudocode}
\usepackage{bm}
\usepackage{color}
\usepackage{multirow}
\usepackage{makecell}
\usepackage{subfig}
\usepackage{soul}
\usepackage{xcolor}

\usepackage{subfig}
\usepackage{wrapfig}

\usepackage{appendix}

\usepackage{graphics}
\usepackage{graphicx}

\allowdisplaybreaks
\usepackage{xspace}

\usepackage{hyperref}
\hypersetup{
	colorlinks=true,
	linkcolor=blue,
	filecolor=magenta,      
	urlcolor=cyan,
        citecolor=blue,
	pdftitle={Overleaf Example},
	pdfpagemode=FullScreen,
}

\usepackage[scaled]{beramono}

\usepackage{xspace}

\newcommand{\alg}{{\textsf{SecureAFL}}\xspace}

\usepackage{color, colortbl}

\definecolor{greyL}{RGB}{230,248,255}

\usepackage{balance}

\newtheorem{assumption}{Assumption}
\newtheorem{thm}{Theorem}
\newtheorem{lem}{Lemma}
\newtheorem*{remark}{Remark}

\algnewcommand\algorithmicforpara{\textbf{for}}
\algnewcommand\algorithmicdoinparallel{\textbf{do in parallel}}
\algdef{S}[FOR]{ForParallel}[1]{\algorithmicforpara\ #1\ \algorithmicdoinparallel}

\copyrightyear{2026}
\acmYear{2026}
\setcopyright{cc}
\setcctype{by}
\acmConference[ASIA CCS '26]{ACM Asia Conference on Computer and Communications Security}{June 01--05, 2026}{Bangalore, India}
\acmBooktitle{ACM Asia Conference on Computer and Communications Security (ASIA CCS '26), June 01--05, 2026, Bangalore, India}
\acmDOI{10.1145/3779208.3807486}
\acmISBN{979-8-4007-2356-8/2026/06}

\begin{document}

\captionsetup[subfigure]{skip=0pt} 
\captionsetup[subtable]{skip=0pt}

\title{SecureAFL: Secure Asynchronous Federated Learning}

\author{Anjun Gao}
\authornote{Equal contribution.}
\affiliation{
	\institution{University of Louisville}
	\city{Louisville}
        \country{USA}
}

\author{Feng Wang}
\authornotemark[1]
\authornote{Feng Wang conducted this research while he was an intern under the supervision of Minghong Fang.}
\affiliation{
	\institution{Northeastern University}
	\city{Shenyang}
        \country{China}
}

\author{Zhenglin Wan}
\affiliation{
	\institution{National University of Singapore}
	\city{Singapore}
        \country{Singapore}
}

\author{Yueyang Quan}
\affiliation{
	\institution{University of North Texas}
	\city{Denton}
        \country{USA}
}

\author{Zhuqing Liu}
\affiliation{
	\institution{University of North Texas}
	\city{Denton}
        \country{USA}
}

\author{Minghong Fang}
\affiliation{
	\institution{University of Louisville}
	\city{Louisville}
         \country{USA}
}

\renewcommand{\shortauthors}{Anjun Gao et al.}

\begin{abstract}
Federated learning (FL) enables multiple clients to collaboratively train a global machine learning model via a server without sharing their private training data. In traditional FL, the system follows a synchronous approach, where the server waits for model updates from numerous clients before aggregating them to update the global model. However, synchronous FL is hindered by the straggler problem. To address this, the asynchronous FL architecture allows the server to update the global model immediately upon receiving any client’s local model update. Despite its advantages, the decentralized nature of asynchronous FL makes it vulnerable to poisoning attacks. Several defenses tailored for asynchronous FL have been proposed, but these mechanisms remain susceptible to advanced attacks or rely on unrealistic server assumptions. In this paper, we introduce \alg, an innovative framework designed to secure asynchronous FL against poisoning attacks. \alg improves the robustness of asynchronous FL by detecting and discarding anomalous updates while estimating the contributions of missing clients. Additionally, it utilizes Byzantine-robust aggregation techniques, such as coordinate-wise median, to integrate the received and estimated updates. Extensive experiments on various real-world datasets demonstrate the effectiveness of \alg.
\end{abstract}

\begin{CCSXML}
<ccs2012>
   <concept>
       <concept_id>10002978.10003006</concept_id>
       <concept_desc>Security and privacy~Systems security</concept_desc>
       <concept_significance>500</concept_significance>
       </concept>
 </ccs2012>
\end{CCSXML}

\ccsdesc[500]{Security and privacy~Systems security}

\keywords{Asynchronous Federated learning, Poisoning Attacks, Robustness}

\maketitle


\section{Introduction} 
\label{sec:intro}

Federated learning (FL)~\cite{mcmahan2017communication} is a distributed machine learning paradigm that has gained considerable traction in recent years. It allows multiple clients to collaboratively train a shared global model under the coordination of a central server. During each training round, the server transmits the global model to the clients, who then refine their local models using their respective datasets. The clients subsequently send their local model updates back to the server, which aggregates these updates to enhance the global model. With its emphasis on respecting client privacy, FL has been widely adopted across diverse domains~\cite{webank,gboard,paulik2021federated}.

Most existing FL systems naturally adopt a {\em synchronous} design~\cite{blanchard2017machine,mhamdi2018hidden,munoz2019byzantine,yin2018byzantine,cao2020fltrust,dou2025toward,fung2020limitations,fang2025byzantine,wang2025poisoning}, where the server waits to receive local model updates from all or most clients before performing aggregation. While this simplifies FL by ignoring global model discrepancies among clients, it suffers from the {\em straggler problem}, where clients with slower hardware or network delays take significantly longer to transmit updates. 
For instance,
training large-scale models, such as GPT~\cite{brown2020language}, is computationally intensive, often requiring days or even weeks on high-performance GPUs. Clients with limited computational resources may take even longer, leading to severe delays in FL. As model sizes continue to grow, the time required for local training increases accordingly, exacerbating the inefficiency of synchronous FL. 
A simple solution is to discard updates from slow clients, but this can harm model accuracy and waste resources~\cite{tandon2017gradient}.

The limitations of synchronous FL underscore the need for an {\em asynchronous} design. In asynchronous FL~\cite{nguyen2022federated, huba2022papaya,chen2020asynchronous,xu2021asynchronous,xie2019asynchronous,chen2020vafl,van2020asynchronous,wang2022asynchronous,liu2024fedasmu}, clients operate on different versions of the global model when refining their local models, and, crucially, the server updates the global model as soon as it receives an update from any client, rather than waiting for all clients to complete their training. This enables continuous learning without being bottlenecked by slower clients. However, since clients may use outdated versions of the global model, update staleness can arise. Despite this challenge, asynchronous FL offers significant advantages in handling straggling clients, improving resource utilization, and enhancing scalability. These benefits have led to its widespread adoption in various applications~\cite{abadi2016tensorflow,paszke2019pytorch,nguyen2022federated, huba2022papaya}, particularly in scenarios where real-time updates are essential.

Like its synchronous counterpart, asynchronous FL is also susceptible to poisoning attacks~\cite{tolpegin2020data,fang2020local,blanchard2017machine,bagdasaryan2020backdoor,xie2019dba,sun2019can,zhang2022neurotoxin,li20233dfed,shejwalkar2021manipulating,zhang2024poisoning,yin2024poisoning,xie2024poisonedfl}. Malicious clients controlled by an attacker can corrupt their local training data or alter their model updates before sending them to the server. This allows the attacker to manipulate the global model to serve its objective, such as causing widespread misclassification~\cite{tolpegin2020data,fang2020local,blanchard2017machine,shejwalkar2021manipulating} or targeting specific predictions~\cite{bagdasaryan2020backdoor,xie2019dba,sun2019can,zhang2022neurotoxin,li20233dfed}.  
To mitigate poisoning attacks, various Byzantine-robust aggregation rules have been developed. However, most of these defenses are designed for synchronous FL~\cite{cao2020fltrust,nguyen2022flame,pan2020justinian,park2021sageflow,wang2022flare,xie2019zeno,ChenPOMACS17,kumari2023baybfed,mhamdi2018hidden,fang2023robust}, where the server can analyze statistical patterns across multiple client updates received simultaneously. These synchronous-based defenses, however, are not applicable to asynchronous FL, as updates arrive one at a time, making statistical anomaly detection unfeasible.  
Another key challenge in securing asynchronous FL lies in handling delayed updates, which often introduce noise and further complicate the server’s ability to differentiate between benign and malicious updates. In response to these issues, a few robust asynchronous FL frameworks~\cite{fang2022aflguard,damaskinos2018asynchronous,yang2021basgd,xie2020zeno++} have emerged in recent years. However, existing approaches either remain vulnerable to poisoning attacks~\cite{damaskinos2018asynchronous,yang2021basgd} or rely on strong assumptions about the server’s capabilities~\cite{fang2022aflguard,xie2020zeno++}, such as assuming access to a separate trusted dataset, an assumption that rarely holds in practice.

In this paper, we introduce \alg, a defense mechanism designed for asynchronous FL that enhances robustness by identifying and discarding anomalous updates while estimating missing client contributions.
Our \alg first applies a filtering mechanism based on the Lipschitz continuity of local updates, ensuring that only those conforming to historical patterns are accepted. By tracking the evolution of client updates over time, the server evaluates the smoothness of updates and discards those that exhibit abrupt deviations, which could indicate adversarial manipulation. This approach enables the server to systematically mitigate the impact of malicious updates while preserving the integrity of benign contributions.

Beyond filtering, our proposed \alg introduces a local update estimation strategy to reconstruct missing client updates using historical information. By approximating these missing updates based on past update trajectories, the server infers plausible updates for clients that have not yet contributed in the current round. Finally, a Byzantine-robust aggregation mechanism, such as the coordinate-wise median~\cite{yin2018byzantine}, is employed to integrate the received and estimated updates, further improving robustness against malicious updates. By combining filtering, estimation, and robust aggregation, our \alg enhances the security and stability of asynchronous FL systems, effectively mitigating the risks posed by poisoning attacks.

We thoroughly assess the performance of our \alg using five datasets from various domains, including large-scale benchmark datasets such as CIFAR-10~\cite{cifar10data} and CIFAR-100~\cite{cifar10data}, as well as the real-world autonomous driving dataset Udacity~\cite{Udacity}. Our evaluation involves ten poisoning attacks consisting of five untargeted attacks, five targeted attacks, and a particularly challenging adaptive attack, alongside comparisons with seven recent asynchronous FL approaches. Experimental results demonstrate that \alg effectively mitigates diverse existing and adaptive poisoning attacks, even when facing scenarios with a high proportion of malicious clients. Furthermore, \alg achieves significant improvements compared to existing asynchronous FL methods designed to be robust against Byzantine adversaries.
The key contributions of this paper are summarized as follows:

\begin{list}{\labelitemi}{\leftmargin=1.15em \itemindent=-0.08em \itemsep=.1em}

\item
We introduce \alg, a novel defense framework designed to mitigate poisoning attacks in asynchronous federated learning.

\item
By conducting a thorough evaluation against 10 different poisoning attacks across 5 datasets and benchmarking \alg against 7 asynchronous federated learning methods, we validate its effectiveness in mitigating the effects of poisoning attacks.

\item
We demonstrate the resilience of \alg against powerful adaptive attacks and highlight its sustained effectiveness even when a substantial portion of clients exhibit malicious behavior.

\end{list}


\begin{table}[t]
	\addtolength{\tabcolsep}{-2.5pt}
	\caption{Summary of key notation.}
	\centering
	\begin{tabular}{|c|c|} \hline 
		{Notation} & {Definition} \\ \hline
		$n$ & Number of clients \\ \hline
		$\bm{w}^t$ & Global model at the \( t \)th training round \\ \hline
        $\bm{g}_i^t$ & Local model update from client $i$ at round $t$ \\ \hline
        $\hat{\bm{g}}_i^t$ & Estimated local model update from client $i$ at round $t$ \\ \hline
    	$\tau_i$ & Delay of client $i$'s local model update \\ \hline
    	$\lambda_i^t$ & Lipschitz factor for client \( i \) at round \( t \) \\ \hline
	\end{tabular}
	\label{main_notaion}
       \vspace{-.05in}
\end{table}

\section{Preliminaries and Related Work}
\label{sec:preliminaries}

\myparatight{Notations}%
In this paper, \(\left\| \cdot \right\|\) represents the \(\ell_2\)-norm, while \([n]\) denotes the set \(\{1, 2, \dots, n\}\).
Table~\ref{main_notaion} lists the key notation used in this paper.

\subsection{Asynchronous FL} 
\label{sec:background}

\begin{table*}[t]
  \centering
    \addtolength{\tabcolsep}{-0.05pt}
  \caption{Comparison of Synchronous FL, Semi-asynchronous FL, and Asynchronous FL.}
  \label{tab:compare_FL}
  \newcolumntype{C}[1]{>{\centering\arraybackslash}m{#1}}

  \begin{tabular}{|C{3.6cm}|C{6cm}|C{4cm}|C{2.5cm}|}
    \hline
     & Representative methods & Batch update or single update & Delay or no delay \\
    \hline
    Synchronous FL &
    Median~\cite{yin2018byzantine}, Trimmed-mean~\cite{yin2018byzantine}, 
    Krum~\cite{blanchard2017machine}, FLAME~\cite{nguyen2022flame}, 
    Baybfed~\cite{kumari2023baybfed}, 
    Bucketing~\cite{karimireddy2020byzantine}, Foolsgold~\cite{fung2020limitations}, 
    DnC~\cite{shejwalkar2021manipulating}, RFLBAT~\cite{wang2022rflbat}, 
    SignGuard~\cite{xu2022byzantine}, FLShield~\cite{kabir2024flshield}, 
    FreqFed~\cite{fereidooni2023freqfed}, 
    BackdoorIndicator~\cite{li2024backdoorindicator}, 
    FLTrust~\cite{cao2020fltrust}, GAA~\cite{pan2020justinian}, 
    FLARE~\cite{wang2022flare},
    FoundationFL~\cite{fang2025we}
    & Batch update & No delay \\
    \hline
    Semi-asynchronous FL &
    Catalyst~\cite{cox2024asynchronous}, PoiSAFL~\cite{pang2025poisafl}
    & Batch update & Delay \\
    \hline
    Asynchronous FL &
    Kardam~\cite{damaskinos2018asynchronous}, BASGD~\cite{yang2021basgd}, 
    Sageflow~\cite{park2021sageflow}, Zeno++~\cite{xie2020zeno++}, 
    AFLGuard~\cite{fang2022aflguard}, AsyncDefender~\cite{bai2025asyncdefender}, \alg (Ours)
    & Single update & Delay \\
    \hline
  \end{tabular}
\end{table*}

In a federated learning (FL) system comprising \( n \) clients, each client \( i \) maintains a local dataset \( D_i \) for \( i \in [n] \).
For convenience, let $ D = \bigcup_{i \in [n]} D_i $ denote the combined training dataset across all clients and \( \mathcal{L} \) represents the loss function. The clients collaboratively train a global model by minimizing a global objective function defined as
$
\min_{\bm{w}} F(\bm{w})
=
\min_{\bm{w}} \sum_{i \in [n]} f_i(\bm{w}),
$
where \( \bm{w} \) denotes the model parameters, \( f_i(\bm{w}) = \mathcal{L}(D_i;\bm{w}) \) is the local objective of client \( i \).

Traditional FL follows a synchronous approach, where the server waits for all client updates before aggregation. However, this process is often slowed by stragglers—clients with delayed submissions.
To mitigate this issue, asynchronous FL has been introduced. Unlike in synchronous FL, where all clients receive the same global model at the beginning of a round and update it simultaneously, asynchronous FL allows the server to update the global model immediately upon receiving a local update from any client~\cite{hard2024learning,xu2021asynchronous,xie2019asynchronous,chen2020vafl}. As a result, clients operate independently and may fetch the global model at different times, leading to variations in the model versions they use for local training. Consequently, each client fine-tunes its local model using a potentially outdated version of the global model, introducing update staleness. 
Specifically, let \( \bm{w}^t \) represent the global model at the \( t \)th training round, and let \( \bm{g}_i^t \) denote the local model update from client \( i \), which is computed based on \( \bm{w}^{t} \). In the \( t \)th round, suppose the server receives a model update \( \bm{g}_i^{t-\tau_i} \) from client \( i \), which was computed using an earlier global model \( \bm{w}^{t-\tau_i} \) from round \( t-\tau_i \), where \( \tau_i \) represents the delay in the local model update. Upon receiving this update, the server incorporates it into the global model through the following update rule:
\begin{align}
\label{updaterule}
\bm{w}^{t+1} =  \bm{w}^{t} -\eta \bm{g}_i^{t-\tau_i},
\end{align}
where $\eta$ represents the global learning rate.

The procedure for asynchronous FL is outlined in Algorithm~\ref{AsynSGD_alg}.
In this algorithm, $T$ represents the total number of training rounds.

\begin{algorithm}[t]
	\caption{Asynchronous FL.}
	\label{AsynSGD_alg}
	\begin{algorithmic}[1]
		 \Statex  \underline{Server:} 
		\State Initializes the global model \(\bm{w}^{0}\) and distributes it to all clients.
		\For {$t=0,1,2,\cdots,T-1$} 
		\State Upon receiving model update $\bm{g}_i^{t-\tau_i}$ from client $i$, the global model is updated as $\bm{w}^{t+1} = \bm{w}^{t} - \eta \bm{g}_i^{t-\tau_i}$.
        \State Sends the updated global model $\bm{w}^{t+1} $ to client $i$.
		\EndFor
		\Statex \underline{Client $i$, $i \in [n]$:} 
		\Repeat
			\State Obtains the global model $\bm{w}^t$ transmitted by the server.
			\State Calculates the gradient $\bm{g}_i^t$ using $\bm{w}^t$ and the local training dataset, then transmits $\bm{g}_i^t$ to the server.
 		\Until{Convergence}
	\end{algorithmic} 
\end{algorithm}

\subsection{Poisoning Attacks to FL} 
\label{sec:Poisoning_Attacks_FL}

The decentralized nature of FL makes the global model vulnerable to poisoning attacks through poisoned local training data or altered local model updates. Poisoning attacks in FL can be categorized as untargeted~\cite{tolpegin2020data,fang2020local,blanchard2017machine,shejwalkar2021manipulating} or targeted~\cite{bagdasaryan2020backdoor,xie2019dba,sun2019can,zhang2022neurotoxin,li20233dfed}.  
Untargeted attacks, such as label flipping attack~\cite{tolpegin2020data}, sign flipping attack~\cite{fang2020local}, and Gaussian attack~\cite{blanchard2017machine}, aim to degrade overall model performance by introducing misleading updates. 
Targeted attacks, such as backdoor attacks~\cite{bagdasaryan2020backdoor,xie2019dba,sun2019can,zhang2022neurotoxin,li20233dfed}, manipulate the model to misbehave only under specific conditions. For example, in the distributed backdoor attack (DBA) attack~\cite{xie2019dba}, the attacker embeds different backdoor triggers into the training data of malicious clients, ensuring the model behaves normally in most cases but produces malicious outputs for specific inputs.

\subsection{Robust Aggregation in Synchronous FL}

Most existing robust aggregation rules for federated learning, including Median and Trimmed-mean~\cite{yin2018byzantine}, Krum~\cite{blanchard2017machine}, FLAME~\cite{nguyen2022flame},
Baybfed~\cite{kumari2023baybfed}, as well as many others \cite{cao2020fltrust,pan2020justinian,wang2022flare,xie2019zeno,ChenPOMACS17,mhamdi2018hidden,fereidooni2023freqfed,zhang2022fldetector,mozaffari2023every,guerraoui2018hidden,cao2021provably,li2019rsa,pillutla2022robust,fang2024byzantine,fang2025we,fang2025provably,xu2024robust,mo2025find}, are designed for synchronous FL settings.
In synchronous FL, the server collects multiple client updates computed from the same global model before aggregation. This batch-based setting enables statistical comparisons across updates, allowing the server to identify and mitigate malicious behavior using robust aggregation rules.

\subsection{Existing Defenses for Asynchronous FL}

In recent years, several Byzantine-robust methods have been proposed specifically for asynchronous FL \cite{fang2022aflguard,damaskinos2018asynchronous,yang2021basgd,xie2020zeno++}.
For example, BASGD \cite{yang2021basgd} introduces a buffering mechanism that accumulates delayed updates, averages updates within each buffer, and then applies a median-based aggregation across buffers. Other approaches, such as Sageflow \cite{park2021sageflow}, Zeno++ \cite{xie2020zeno++}, and AFLGuard \cite{fang2022aflguard}, rely on the availability of a trusted dataset at the server to generate reference updates and evaluate incoming updates based on their alignment with this reference.
While these methods adapt robustness techniques to asynchronous settings, they rely on additional assumptions or mechanisms that may not hold in practice.
A more recent defense, AsyncDefender~\cite{bai2025asyncdefender}, assigns weights according to the cosine similarity between client updates and the global model.

\subsection{Synchronous vs. Semi-Asynchronous vs. Asynchronous FL}

Table~\ref{tab:compare_FL} summarizes the key differences between synchronous FL, semi-asynchronous FL, and asynchronous FL.
Synchronous FL adopts a batch-update mechanism without update delay, enabling a wide range of robust aggregation rules but suffering from the straggler problem. Semi-asynchronous FL relaxes strict synchronization by allowing delayed updates while still performing batch aggregation, which reduces straggler impact and permits limited robust defenses such as Catalyst~\cite{cox2024asynchronous} and PoiSAFL~\cite{pang2025poisafl}.
In contrast, asynchronous FL updates the global model immediately upon receiving a single client update, eliminating server-side waiting and improving scalability. However, this single-update and delayed-update setting prevents the direct application of batch-based robust aggregation rules, requiring fundamentally different defense designs such as Kardam~\cite{damaskinos2018asynchronous}, BASGD~\cite{yang2021basgd}, Sageflow~\cite{park2021sageflow}, Zeno++~\cite{xie2020zeno++}, AFLGuard~\cite{fang2022aflguard}, 
AsyncDefender~\cite{bai2025asyncdefender}
and our proposed \alg.

\subsection{Limitations of Existing Defenses}

Despite recent progress, existing defenses for asynchronous FL exhibit notable limitations. First, methods such as Kardam~\cite{damaskinos2018asynchronous} and BASGD~\cite{yang2021basgd} fail to effectively mitigate poisoning attacks under strong adversarial settings, as demonstrated in our experiments. Second, defenses including Sageflow~\cite{park2021sageflow}, Zeno++\cite{xie2020zeno++}, and AFLGuard~\cite{fang2022aflguard} impose strong assumptions on the server, requiring access to a trusted dataset that closely matches the clients’ data distribution, an assumption that is often unrealistic.
AsyncDefender~\cite{bai2025asyncdefender} weights updates by their directional alignment with the global model, but this design can be exploited by the attacker who crafts well-aligned malicious updates to evade detection.

Several recent works \cite{pang2025poisafl,cox2024asynchronous,feng2021bafl,miao2023robust} also address robustness under delayed updates. However, approaches such as Catalyst~\cite{cox2024asynchronous} and PoiSAFL~\cite{pang2025poisafl} operate in semi-asynchronous settings, where the server must still wait for a batch of delayed updates. Other methods~\cite{feng2021bafl,miao2023robust} rely on blockchain or homomorphic encryption, which introduce substantial computational and deployment overhead. In contrast, \textbf{we focus on a fully asynchronous setting}, where the server updates the global model immediately upon receiving a single update, without relying on trusted data or heavy cryptographic assumptions.
Note that in~\cite{karimireddy2021learning}, history is incorporated via per-client momentum to improve robust aggregation, whereas our approach uses historical patterns to estimate missing or delayed updates, leading to a fundamentally different mechanism and objective.


\section{Threat Model}

\myparatight{Attacker’s goal and knowledge}%
Building on prior works~\cite{fang2020local,shejwalkar2021manipulating,cao2020fltrust}, we assume that the attacker controls a subset of malicious clients capable of either poisoning their local training data or directly altering their model updates to advance the attacker's objectives. The attacker may possess either full or partial knowledge of the FL system. In a full-knowledge attack, the attacker has access to all clients' model updates and is aware of the server's defense mechanism. In contrast, under a partial-knowledge attack, the attacker is limited to the model updates of malicious clients while still knowing the server's defense strategy. As noted in~\cite{fang2020local,shejwalkar2021manipulating}, full-knowledge attacks are significantly more potent than their partial-knowledge counterparts. Therefore, we employ the full-knowledge attack to assess the resilience of our proposed defense.

\myparatight{Defender’s goal and knowledge}%
The defender lacks any prior knowledge of the attacker's strategy or the number of malicious clients in the system. Our objective is to develop a reliable defense mechanism for asynchronous FL that meets two key criteria. First, in a benign environment where all clients behave honestly, the global model trained with our defense should achieve performance comparable to that of AsyncSGD~\cite{zheng2017asynchronous}, the state-of-the-art approach in such settings. Second, when facing poisoning attacks, the defense must effectively limit the impact of malicious clients, preserving the integrity of the learned model.

\section{The \alg Algorithm} 
\label{sec:alg}

\subsection{Overview}

Our proposed method, \alg, strengthens asynchronous FL systems against adversarial threats by systematically identifying malicious updates, reconstructing missing client contributions, and applying a resilient aggregation strategy. It begins by evaluating the consistency of local updates through their smoothness properties, discarding those that exhibit abrupt deviations from historical trends. To address the challenge of incomplete client participation, \alg approximates the missing updates by leveraging past model evolution, ensuring that the aggregation process remains balanced. Finally, the server combines both received and estimated updates using a Byzantine-robust aggregation mechanism that limits the influence of malicious manipulations. By seamlessly detecting anomalous updates, reconstructing missing client contributions, and employing a resilient aggregation strategy, \alg safeguards the learning process from malicious disruptions. This cohesive approach ensures that unreliable updates are excluded, plausible estimates compensate for incomplete participation, and the final aggregation remains robust, ultimately maintaining the integrity of the global model.

\subsection{Local Model Updates Filtering}

Our \alg incorporates a filtering mechanism grounded in the Lipschitz-smooth property of local model updates, leveraging their inherent smoothness to differentiate between benign and malicious contributions.
This property ensures that updates do not change too abruptly, thereby contributing to the stability of the learning process. To illustrate this in the context of our approach, consider a training round \( t \), where the server receives the local model update \( \bm{g}_i^{t-\tau_i} \) from client \( i \). Let \( \bm{g}_i^{ \varphi} \) denote the last local model update that client \( i \) sent to the server during training round \( \varphi \), where clearly, \( \varphi < t-\tau_i \). These earlier updates serve as a reference to track the evolution of client-side models over time. Additionally, let \( \bm{w}^{t-\tau_i} \) and \( \bm{w}^{\varphi} \) represent the corresponding global models used by client \( i \) to compute \( \bm{g}_i^{t-\tau_i} \) and \( \bm{g}_i^{\varphi} \), respectively.

To assess the consistency of the received update, the server calculates a Lipschitz factor for client \( i \) at round \( t \), denoted as \( \lambda_i^t \), which is defined as:
\begin{align}
\label{lipschitz_filter}
\lambda_i^t = \frac{\left\| \bm{g}_i^{t-\tau_i}  - \bm{g}_i^{ \varphi} \right\|}{\left\| \bm{w}^{t-\tau_i} - \bm{w}^{\varphi} \right\|}.
\end{align}

By computing \( \lambda_i^t \), the server quantifies the smoothness of the local model update, allowing it to evaluate whether the update aligns with expected behavior. To systematically track this metric, the server maintains a historical record of all computed Lipschitz factors. Let \( Q^t \) denote the list of Lipschitz factors up to round \( t \), capturing the evolution of local model updates over time. 
The motivation behind this approach is to identify anomalous or potentially malicious updates by comparing the smoothness of the current update against the historical distribution of client behaviors. Specifically, a local model update \( \bm{g}_i^{t-\tau_i} \) is deemed benign if it satisfies:
\begin{align}
\label{filter_con}
\lambda_i^t \le Q^t_{\alpha},
\end{align}
where \( Q^t_{\alpha} \) represents the \( \alpha \)-th percentile of values in \( Q^t \). This percentile-based threshold helps filter out abnormal updates, ensuring that only those within an expected range are accepted. By enforcing this constraint, our method effectively filters out abrupt or suspicious changes in model updates, which may indicate malicious behavior, such as data poisoning or model manipulation.

\begin{algorithm}[t] 
	\caption{Estimate the client's update.}
        \label{alg:estimate_update}
	\begin{algorithmic}[1]
		\renewcommand{\algorithmicrequire}{\textbf{Input:}}
		\renewcommand{\algorithmicensure}{\textbf{Output:}}
		\Require Client $k$; global models up to round $t$; model updates received up to round $t$; L-BFGS buffer size $\epsilon$. 
 
		\Ensure Estimated update $\hat{\bm{g}}_k^t$. 
        
        \State Update the L-BFGS buffers 
    $\bm{\Phi}^{t,\epsilon}$ and $\bm{\Pi}_k^{t,\epsilon}$.
    
        \State  $\Delta {\bm{w}}^{t} = \bm{w}^{t} - \bm{w}^{v}$.

        \State // We denote $\text{Diag}(\bm{Y})$ as the matrix consisting of the diagonal elements of $\bm{Y}$, and $\text{Tril}(\bm{Y})$ as the lower triangular portion of $\bm{Y}$, with $\bm{Y}^{\top}$ representing the transpose of $\bm{Y}$.
        
        \State  $\bm{Y}^{t,\epsilon}_k = (\bm{\Phi}^{t,\epsilon})^{\top} \bm{\Pi}_k^{t,\epsilon}$.
        
       \State $\bm{B}^{t,\epsilon}_k = \text{Diag}(\bm{Y}^{t,\epsilon}_k)$, $\bm{J}^{t,\epsilon}_k = \text{Tril}(\bm{Y}^{t,\epsilon}_k)$.
       
        \State  $\mu = ((\Delta \bm{g}^{t-1}_k)^{\top} \Delta \bm{w}^{t-1}) / ((\Delta \bm{w}^{t-1})^{\top}\Delta \bm{w}^{t-1})$.
        
        \State $\bm{l} = \begin{bmatrix}
        -\bm{B}_k^{t,\epsilon} & (\bm{J}_k^{t,\epsilon})^{\top} \\
        \bm{J}_k^{t,\epsilon} & \mu(\bm{\Phi}^{t,\epsilon})^{\top}\bm{\Phi}^{t,\epsilon}
        \end{bmatrix}^{-1}
        \begin{bmatrix}
        (\bm{\Pi}_k^{t,\epsilon})^{\top}\Delta\bm{w}^t \\
        \mu(\bm{\Phi}^{t,\epsilon})^{\top}\Delta\bm{w}^t 
        \end{bmatrix}$. 
        \State $\bm{H}_k^t\Delta\bm{w}^t = \mu \Delta\bm{w}^t - \begin{bmatrix}
        \bm{\Pi}_k^{t,\epsilon} &
        \mu\bm{\Phi}^{t,\epsilon}
        \end{bmatrix}\bm{l}$.
        \State $\hat{\bm{g}}_k^t =  \bm{g}_k^{v} + \bm{H}_k^t\Delta\bm{w}^t$.
	\end{algorithmic}
\end{algorithm}

\begin{algorithm}[t]
	\caption{Our \alg.}
	\label{our_alg}
	\begin{algorithmic}[1]
		\Statex  \underline{Server:} 
        
		\State Initializes the global model \(\bm{w}^{0}\) and distributes it to all clients.
        \State $Q \leftarrow \emptyset$.
		\For {$t=0,1,2,\cdots,T-1$} 
		\State After receiving the model update \(\bm{g}_i^{t-\tau_i}\) from client \(i\):
        \If {$t=0$}
        \State Apply $\ell_2$-norm clipping to $\bm{g}_i^0$ with threshold $G$: if $\left\|\bm{g}_i^0 \right\| > G$, rescale $\bm{g}_i^0$ such that $\left\|\bm{g}_i^0 \right\| = G$.
        \State Sets $\bm{g}^0 \leftarrow \bm{g}_i^0$.
         \Else
         \State Determines the Lipschitz factor \( \lambda_i^t \) from client \(i\) using Eq.~(\ref{lipschitz_filter}).
         \State $Q \leftarrow Q \bigcup \{\lambda_i^t\} $.
        \State Estimates the update \(\hat{\bm{g}}_k^t\) for each of the remaining \(n-1\) clients according to Eq.~(\ref{update_estimate}).
        \If {Eq.~(\ref{filter_con}) is satisfied}
         \State Computes the aggregated update $\bm{g}^t$ per Eq.~(\ref{agg_benign}).
        \Else
    	\State Computes the aggregated update $\bm{g}^t$ per Eq.~(\ref{agg_est}).

        \EndIf
        \EndIf
        \State Updates the global model as $\bm{w}^{t+1} =  \bm{w}^{t} -\eta \bm{g}^t$.
        \State Sends the updated global model $\bm{w}^{t+1} $ to client $i$.
		\EndFor
	\end{algorithmic} 
\end{algorithm}

\subsection{Local Model Updates Estimation}

Our \alg retains both the global model and the received local model updates from each training round. At round \( t \), upon receiving the local model update \( \bm{g}_i^{t-\tau_i} \) from client \( i \), the server does not immediately incorporate it into the global model. Instead, utilizing historical information such as stored global models and past updates, our \alg approximates the local model updates of the remaining \( n-1 \) clients.
Let \( \mathcal{S} \) represent the set of these clients, i.e., \( \mathcal{S} = [n] \setminus \{i\} \). For each client \( k \in \mathcal{S} \), \alg estimates its local model update at round \( t \), after which the server aggregates client \( i \)'s received update \( \bm{g}_i^{t-\tau_i} \) with the estimated updates from the other \( n-1 \) clients. The core challenge, therefore, is to accurately infer these missing updates based on historical information.

For a given client \( k \in \mathcal{S} \), let its most recent model update sent to the server be \( \bm{g}_k^{v} \), which was computed using a previous global model version \( \bm{w}^{v} \), where \( v < t \). Applying the Cauchy mean value theorem~\cite{lang2012real}, we estimate the local model update for client \( k \) at round \( t \) as:
\begin{align}
\label{update_estimate}
\hat{\bm{g}}_k^t =  \bm{g}_k^{v} + \bm{H}_k^t (\bm{w}^t - \bm{w}^{v}),
\end{align}
where \( \bm{H}_k^t \) represents the integrated Hessian matrix associated with client \( k \) at training round \( t \). It is derived by averaging the Hessian matrix along the trajectory between the past global model \( \bm{w}^v \) and the current global model \( \bm{w}^t \), computed as  
$
\bm{H}_k^t = \int_0^1 \bm{H}(\bm{w}^v + x(\bm{w}^t - \bm{w}^v)) \, dx
$.

In Eq.~(\ref{update_estimate}), it is evident that the server can estimate the updates from clients by leveraging both the stored historical data and the current global model at round $t$. However, directly calculating the estimated update $\hat{\bm{g}}_k^t$ from this equation is computationally demanding, primarily due to the need to compute the integrated Hessian matrix $\bm{H}_{k}^{t}$. To address this challenge, we propose an approximation of the integrated Hessian matrix using the well-known L-BFGS algorithm~\cite{byrd1995limited,byrd1994representations}. Rather than calculating the Hessian matrix explicitly, the L-BFGS method estimates it by utilizing a limited set of historical information from previous training rounds. Specifically, we define $\Delta {\bm{w}}^{t} = \bm{w}^{t} - \bm{w}^{v}$ as the global model difference at round $t$, and $\Delta {\bm{g}}_{k}^{t} = \hat{\bm{g}}_k^t - \bm{g}_k^v$ as the difference in the local model update for client $k$ at round $t$. Furthermore, $\bm{\Phi}^{t,\epsilon} = \{ \Delta {\bm{w}}^{t-\epsilon}, \Delta {\bm{w}}^{t-\epsilon+1}, \cdots, \Delta {\bm{w}}^{t-1} \}$ represents the set of global model differences from the past $\epsilon$ rounds, and $\bm{\Pi}_k^{t,\epsilon} = \{ \Delta {\bm{g}}_k^{t-\epsilon}, \Delta {\bm{g}}_k^{t-\epsilon+1}, \cdots, \Delta {\bm{g}}_k^{t-1} \}$ represents the local model update differences for client $k$ over the same period. The L-BFGS method uses these differences, along with the global model difference $\Delta {\bm{w}}^{t}$, to compute Hessian-vector products $\bm{H}_{k}^{t} \Delta {\bm{w}}^{t}$, which are then used to estimate the local model update $\hat{\bm{g}}_k^t$.
Algorithm~\ref{alg:estimate_update} presents the pseudocode for estimating the update of client $k$ during the training round $t$, where \( k \in \mathcal{S} \).

\subsection{Local Model Updates Aggregation}

After receiving client \( i \)'s update \( \bm{g}_i^{t-\tau_i} \), the server estimates the local updates of the remaining clients. To ensure robustness against poisoning attacks, it then applies a Byzantine-robust aggregation strategy, such as the coordinate-wise median~\cite{yin2018byzantine}, to aggregates these updates. Specifically, if Eq.~(\ref{filter_con}) holds, indicating that client \( i \)'s update is deemed reliable, the server integrates it with the estimated updates as:  
\begin{align}  
\label{agg_benign}  
\bm{g}^t = \text{Median}(\bm{g}_i^{t-\tau_i}, \{\hat{\bm{g}}_k^t\}_{k \in \mathcal{S}}),  
\end{align}  
where \( \bm{g}^t \) represents the aggregated update, \( \text{Median}(\cdot) \) denotes the coordinate-wise median aggregation~\cite{yin2018byzantine}, and \( \mathcal{S} \) refers to the set of the remaining \( n-1 \) clients, excluding client \( i \).  
Conversely, if Eq.~(\ref{filter_con}) is not satisfied, suggesting that client \( i \)'s update is likely malicious, the server disregards it and aggregates only the estimated updates:  
\begin{align}  
\label{agg_est}  
\bm{g}^t = \text{Median}(\{\hat{\bm{g}}_k^t\}_{k \in \mathcal{S}}).  
\end{align}  

It is important to note that instead of a simple averaging strategy, the server employs a robust aggregation mechanism in both Eq.~(\ref{agg_benign}) and Eq.~(\ref{agg_est}). This choice stems from the server's fundamental limitation: it lacks prior knowledge of the attacker's presence. Since malicious clients craft their updates strategically, their estimated updates remain adversarial, potentially degrading the global model. Therefore, using a robust aggregation rule helps mitigate the influence of these harmful updates, enhancing the security and reliability of the training process.
We also remark that if client \( i \)'s update is considered benign, it is not applied directly to the global model. Instead, it is combined with estimated updates from other clients (see Eq.~(\ref{agg_benign})). This approach ensures that valuable information embedded in benign clients' past update trajectories is preserved. By incorporating estimations based on historical updates, we enhance the robustness of \alg, as demonstrated in Table~\ref{variant_result}.

Algorithm~\ref{our_alg} presents the pseudocode of \alg, focusing solely on the server-side procedure. 
In the first communication round, no historical updates are available to compute the Lipschitz factor or construct the L-BFGS buffers. Therefore, the server applies $\ell_2$-norm clipping to the received updates to ensure bounded influence. Specifically, if the $\ell_2$ norm of a client’s update exceeds a predefined threshold $G$, the update is rescaled to have norm $G$.
In each of the following training rounds, upon receiving a local model update from client \(i\), the server first calculates the Lipschitz factor \( \lambda_i^t \) for the client. If the received update is deemed benign, the server incorporates it with the estimated updates; otherwise, only the estimated updates are aggregated. Finally, the server transmits the updated global model back to client \(i\).

\section{Theoretical Performance Analysis} \label{sec:analysis}

In this section, we provide a non-convex convergence guarantee for \alg under asynchronous updates, L-BFGS-based gradient estimation for non-uploading clients, and coordinate-wise median aggregation in the presence of Byzantine clients. Let $\mathcal{V}_t= \bigl\{\bm{g}_{i}^{t-\tau_i}\bigr\}\ \cup\ \bigl\{\hat{\bm{g}}_k^t: k\in\mathcal{S}\bigr\}$, which denotes the set of vectors input to the coordinate-wise median aggregation at round $t$, $\mathcal{H}$ be the set of benign clients of total clients.
We first introduce some necessary assumptions for analysis.

\begin{assumption}
\label{ass:smooth}
Each benign objective $f_j$ for \( j \in \mathcal{H} \) is differentiable and has $L$-Lipschitz gradient. That is, for all $\bm{w},\bm{v}\in\mathbb{R}^d$, and letting $F_{\mathcal H}(\bm w)=\frac{1}{m}\sum_{j\in\mathcal H} f_j(\bm w)$, we have:

\begin{align}
\|\nabla f_j(\bm{w})-\nabla f_j(\bm{v})\|\le L\|\bm{w}-\bm{v}\|. 
\end{align}
Consequently, the benign objective $F_\mathcal{H}$ is also $L$-smooth.
\end{assumption}

\begin{assumption}
\label{ass:lower}
$F_{\mathcal H}$ is bounded from below: 
\begin{align}
    \inf_{\bm{w}} F_{\mathcal H} = F_{\mathcal H}{\star} > -\infty.
\end{align}
\end{assumption}
\begin{assumption}
\label{ass:byz_fraction}
There are at most $b$ Byzantine clients and $m=n-b$ benign clients, and the median input size always strictly exceeds twice the Byzantine count:
\begin{align}
|\mathcal{V}_t| \ge 2b+1,\quad \forall t. 
\end{align}
In particular, since $|\mathcal{V}_t|\in\{n-1,n\}$ in \alg, a sufficient condition is
\begin{align}
b < \frac{n-1}{2}. 
\end{align}
\end{assumption}
\begin{assumption}
\label{ass:stale}
For any client $i \in [n]$, the delay $\tau_i$ associated with any local model update satisfies $\tau_i \le \tau_{\max}$, where $\tau_{\max}$ acts as the global upper bound on the delays across all clients in the system.
\end{assumption}
\begin{assumption}
\label{ass:benign_upload}
Whenever the client $i$ is benign, its update satisfies
\begin{align}
\bm{g}_{i}^{t-\tau_i} = \nabla f_{i}(\bm{w}^{t-\tau_i}) + \bm{\xi}_{i,t}, \label{eq:benign_upload_decomp}
\end{align}
where $\bm{\xi}_{i,t}$ is a zero-mean noise term conditioned on the past:
\begin{align}
\mathbb{E}[\bm{\xi}_{i,t} \mid \mathcal{F}_t] = \bm{0},\quad
\mathbb{E}[\|\bm{\xi}_{i,t}\|^2 \mid \mathcal{F}_t]\le \sigma^2. 
\end{align}
Here $\mathcal{F}_t$ denotes the sigma-algebra generated by the entire history up to $\bm{w}^t$.
\end{assumption}
\begin{assumption}
\label{ass:est}
For any benign client $j\in\mathcal{H}$, at any round $t$ in which $j\neq i$, the estimator output satisfies the second-moment bound
\begin{align}
\mathbb{E}\bigl[\|\hat{\bm{g}}_j^t - \nabla f_j(\bm{w}^t)\|^2 \mid \mathcal{F}_t\bigr] \le \varepsilon_{\mathrm{est}}^2. \label{eq:est_acc}
\end{align}
The constant $\varepsilon_{\mathrm{est}}$ depends on the L-BFGS buffer size, curvature variation, and how often the client updates are refreshed.
\end{assumption}

\begin{assumption}
\label{ass:filter_bounded}
Byzantine behavior is arbitrary in communication; however, due to the Lipschitz filter and/or an explicit server-side clipping rule, all vectors that enter the median satisfy a uniform second-moment bound:
\begin{align}
\mathbb{E}\bigl[\|\bm{v}\|^2 \mid \mathcal{F}_t\bigr] \le G^2,\quad \forall \bm{v}\in \mathcal{V}_t,\ \forall t. \label{eq:input_norm_bound}
\end{align}
Moreover, the number of Byzantine vectors in $\mathcal{V}_t$ is at most $b$.
\end{assumption}
\begin{assumption}
\label{ass:heterogeneity}
There exists $\zeta\ge 0$ such that for all $\bm{w}$,
\begin{align}
\frac{1}{m}\sum_{j\in\mathcal{H}} \|\nabla f_j(\bm{w}) - \nabla F_{\mathcal H}(\bm{w})\|^2 \le \zeta^2. \label{eq:heterogeneity}
\end{align}
\end{assumption}

\begin{thm}[Convergence of \alg under bounded tracking error]
\label{thm:main}
Let Assumptions~\ref{ass:smooth}--\ref{ass:heterogeneity} hold. Suppose the stepsize satisfies
\begin{align}
0<\eta \le \frac{1}{4L}. \label{eq:stepsize_cond}
\end{align}
Then for any $T\ge 1$,
\begin{align}
\frac{1}{T}\sum_{t=0}^{T-1}\mathbb{E}\bigl[\|\nabla F_{\mathcal H}(\bm{w}^t)\|^2\bigr]
\le
\frac{4\bigl(F_{\mathcal H}(\bm{w}^0)-F_{\mathcal{H}}{\star}\bigr)}{\eta T}
+
4E_{\mathrm{track}}^2, \label{eq:main_rate}
\end{align}
where $E_{\mathrm{track}}^2$ can be chosen as
\begin{align}
E_{\mathrm{track}}^2
\le
C_{\mathrm{med}}\Bigl(
\zeta^2
+\varepsilon_{\mathrm{est}}^2
+\sigma^2
+L^2\eta^2\tau_{\max}^2 G^2
\Bigr), 
\end{align}
for an absolute constant $C_{\mathrm{med}}>0$ that depends only on the coordinate-wise median bound used in the analysis.
\end{thm}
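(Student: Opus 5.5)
The plan follows the standard template for biased SGD on a nonconvex $L$-smooth objective. The update is $\bm{w}^{t+1}=\bm{w}^t-\eta\bm{g}^t$, and we write
\[
\bm{g}^t=\nabla F_{\mathcal H}(\bm{w}^t)+\bm{e}^t,
\]
where $\bm{e}^t$ is the tracking error. We first establish
\[
\mathbb{E}\bigl[\|\bm{e}^t\|^2\mid\mathcal F_t\bigr]\le E_{\mathrm{track}}^2 .
\]
Then we apply the descent lemma from Assumption~\ref{ass:smooth}:
\[
F_{\mathcal H}(\bm{w}^{t+1})\le F_{\mathcal H}(\bm{w}^t)-\eta\langle\nabla F_{\mathcal H}(\bm{w}^t),\bm{g}^t\rangle+\tfrac{L\eta^2}{2}\|\bm{g}^t\|^2 .
\]
For the inner product we use Young's inequality:
\[
-\langle a,a+e\rangle\le-\tfrac12\|a\|^2+\tfrac12\|e\|^2 .
\]
For the quadratic term we use $\|\bm{g}^t\|^2\le 2\|a\|^2+2\|e\|^2$. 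Since $\eta\le 1/(4L)$ gives $L\eta^2\le\eta/4$, these combine to
\[
\mathbb{E}F_{\mathcal H}(\bm{w}^{t+1})\le \mathbb{E}F_{\mathcal H}(\bm{w}^t)-\tfrac{\eta}{4}\mathbb{E}\|\nabla F_{\mathcal H}(\bm{w}^t)\|^2+\eta E_{\mathrm{track}}^2 ,
\]
with constants at most these. Telescoping over $t=0,\dots,T-1$, using Assumption~\ref{ass:lower}, and dividing by $\eta T/4$ yields exactly \eqref{eq:main_rate}.

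The substance is the bound on $\bm{e}^t$. We use a coordinate-wise median robustness lemma of the type in \cite{yin2018byzantine}. Suppose a multiset of $N\ge 2b+1$ vectors contains at least $N-b$ ``good'' vectors $\bm{u}_j$, and let $\bar{\bm{u}}$ be a reference point. Then
\[
\|\mathrm{Median}-\bar{\bm{u}}\|^2\le C_{\mathrm{med}}\cdot\tfrac{1}{|\text{good}|}\sum_j\|\bm{u}_j-\bar{\bm{u}}\|^2 .
\]
This holds per coordinate because the median lies between two good order statistics. Under Assumption~\ref{ass:byz_fraction} the good majority is strict, so one can bound the squared deviation of the median by a constant times the average squared deviation of good points. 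Summing over coordinates gives the vector version.

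We apply the lemma with $\bar{\bm{u}}=\nabla F_{\mathcal H}(\bm{w}^t)$, taking as good vectors the benign estimated updates $\hat{\bm{g}}_j^t$ and, if $i$ is benign and accepted, $\bm{g}_i^{t-\tau_i}$. Byzantine inputs are controlled only through their count, by Assumption~\ref{ass:filter_bounded}, and the median lemma does not need their magnitude. For a benign estimate,
\[
\hat{\bm{g}}_j^t-\nabla F_{\mathcal H}=(\hat{\bm{g}}_j^t-\nabla f_j(\bm{w}^t))+(\nabla f_j(\bm{w}^t)-\nabla F_{\mathcal H}(\bm{w}^t)) .
\]
Averaging the squared norms gives $2\varepsilon_{\mathrm{est}}^2+2\zeta^2$ by Assumptions~\ref{ass:est} and~\ref{ass:heterogeneity}. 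The benign uploaded vector has three parts:
\[
\bm{g}_i^{t-\tau_i}-\nabla F_{\mathcal H}(\bm{w}^t)=\bm{\xi}_{i,t}+(\nabla f_i(\bm{w}^{t-\tau_i})-\nabla f_i(\bm{w}^t))+(\nabla f_i(\bm{w}^t)-\nabla F_{\mathcal H}(\bm{w}^t)) .
\]
The noise part contributes $\sigma^2$ (Assumption~\ref{ass:benign_upload}). The staleness part is at most $L\|\bm{w}^t-\bm{w}^{t-\tau_i}\|\le L\eta\sum_{s=t-\tau_i}^{t-1}\|\bm{g}^s\|$. Because the median's output lies coordinatewise within the range of its inputs, $\mathbb{E}\|\bm{g}^s\|^2\le G^2$ up to a constant, so with Assumption~\ref{ass:stale} and Cauchy–Schwarz the staleness part contributes $L^2\eta^2\tau_{\max}^2G^2$. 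The heterogeneity part is bounded by $m\zeta^2$ for a single client, or absorbed by averaging. Collecting terms gives $E_{\mathrm{track}}^2\le C_{\mathrm{med}}(\zeta^2+\varepsilon_{\mathrm{est}}^2+\sigma^2+L^2\eta^2\tau_{\max}^2G^2)$.

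The main obstacle is the median lemma itself under only second-moment, non-identically distributed and non-independent inputs. The plan is to prove a deterministic, pathwise bound first: per coordinate, the median's distance from $\bar u$ is at most the $(b+1)$-th largest good deviation, which is at most $\sqrt{\sum_j(\text{dev}_j)^2/(N-2b)}$. Taking conditional expectations afterwards then avoids independence assumptions. A secondary subtlety is the single-client heterogeneity term for $\bm{g}_i$. It can be handled because one good vector among $N-b$ contributes only $1/(N-b)$ of the average, so its $m\zeta^2$ bound still collapses to $O(\zeta^2)$ after absorbing the ratio $m/(N-b)$, which is bounded by a constant, into $C_{\mathrm{med}}$.
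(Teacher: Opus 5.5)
Your plan is sound, and its skeleton matches the paper's. The descent step with Young's inequality and $\eta\le 1/(4L)$, the telescoping, and the split of each benign median input into estimation error (or noise plus staleness) plus heterogeneity are exactly what Appendix~\ref{sec:appendix_main} and Lemmas~\ref{lem:stale}--\ref{lem:median_tracking} do.

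You differ genuinely at the median step. The paper never proves it: Lemma~\ref{lem:median_tracking} simply \emph{assumes} the robustness inequality \eqref{eq:median_variance_bound}, which is not among Assumptions~\ref{ass:smooth}--\ref{ass:heterogeneity}. In addition, Lemma~\ref{lem:stale} applies the input bound of Assumption~\ref{ass:filter_bounded} directly to the median \emph{outputs} $\bm{g}^s$. Your pathwise order-statistic lemma derives the former, and your observation that the median is controlled by its inputs supplies the latter. So your route makes the theorem actually follow from the listed assumptions, with an explicit $C_{\mathrm{med}}$. Because the bound is pathwise, the filter-dependent set of good inputs also causes no trouble: include benign $i$'s term whether or not it is accepted. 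The paper's route buys brevity and modularity, since any aggregator satisfying \eqref{eq:median_variance_bound} plugs in.

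Some details need fixing.
\begin{enumerate}
\item Your claim that the median's deviation is at most the $(b+1)$-th largest good deviation is false. Take $N=3$, $b=1$, good values $\{0,10\}$, Byzantine value $100$, reference $0$: the median is $10$, but the second-largest good deviation is $0$.
\item Your constant $1/(N-2b)$ is too small. Take $N=5$, $b=1$, good values $\{0,0,1,1\}$, Byzantine value $100$: then $\delta^2=1>2/3$.
\item The correct count is this. If a median coordinate exceeds the reference by $\delta>0$, then at least $\lceil N/2\rceil-b\ge (N-2b)/2$ good values lie at least $\delta$ above the reference. Hence $\delta^2\le \frac{2}{N-2b}\sum_j \mathrm{dev}_j^2$ and $C_{\mathrm{med}}\le 2(N-b)/(N-2b)$. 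This is a Byzantine-fraction constant, not an absolute one if $b/N$ may approach $1/2$, which matches the paper's own qualifier in Lemma~\ref{lem:median_tracking}.
\item ``Within the range of its inputs'' only gives $\|\bm{g}^s\|^2\le\sum_{\bm{v}\in\mathcal{V}_s}\|\bm{v}\|^2$, i.e., a factor $N$. Rerun your counting argument with reference $\bm{0}$ and no bad points to get $\mathbb{E}\|\bm{g}^s\|^2\le 2G^2$.
\item The staleness term depends on the past aggregates $\bm{g}^s$, $s<t$, which are $\mathcal{F}_t$-measurable, so it is controlled only in total expectation. State the tracking bound as $\mathbb{E}\|\bm{e}^t\|^2\le E_{\mathrm{track}}^2$ rather than conditionally on $\mathcal{F}_t$. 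This is all the descent step needs, and it is what your pathwise lemma gives after taking total expectation. The paper's Lemma~\ref{lem:benign_vector_error} makes the same slip.
\end{enumerate}
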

\begin{proof}
The proof is relegated to Appendix~\ref{sec:appendix_main}.
\end{proof}
\begin{corollary}[Diminishing stepsize]
\label{cor:diminishing}
Under the conditions of Theorem~\ref{thm:main}, choose $\eta = \min\{\frac{1}{4L},\, \frac{c}{\sqrt{T}}\}$ for any $c>0$. Then
\begin{align}
\frac{1}{T}\sum_{t=0}^{T-1}\mathbb{E}\|\nabla F_{\mathcal H}(\bm{w}^t)\|^2
=
\mathcal{O}\Bigl(\frac{1}{\sqrt{T}}\Bigr)
+
\mathcal{O}\bigl(E_{\mathrm{track}}^2\bigr). 
\end{align}
\end{corollary}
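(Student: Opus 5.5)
The statement to prove is Corollary~\ref{cor:diminishing}. The plan is to plug the prescribed stepsize into the bound \eqref{eq:main_rate} of Theorem~\ref{thm:main} and control the two terms separately. Since $\eta=\min\{\frac{1}{4L},\frac{c}{\sqrt T}\}\le \frac{1}{4L}$, the stepsize condition \eqref{eq:stepsize_cond} holds, so the theorem applies verbatim. Write $\Delta_0:=F_{\mathcal H}(\bm w^0)-F_{\mathcal H}{\star}$, which is finite by Assumption~\ref{ass:lower}. Then $\frac{1}{\eta}=\max\{4L,\sqrt T/c\}\le 4L+\sqrt T/c$, so the first term of \eqref{eq:main_rate} satisfies
\begin{align*}
\frac{4\Delta_0}{\eta T}\le \frac{16L\Delta_0}{T}+\frac{4\Delta_0}{c\sqrt T}.
\end{align*}
Because $\frac{1}{T}\le\frac{1}{\sqrt T}$ for $T\ge1$, this term is $\mathcal O(1/\sqrt T)$, with constant $16L\Delta_0+4\Delta_0/c$.

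For the second term, $4E_{\mathrm{track}}^2$ is already of the form $\mathcal O(E_{\mathrm{track}}^2)$. One subtlety is that the bound on $E_{\mathrm{track}}^2$ in Theorem~\ref{thm:main} contains the $\eta$-dependent piece $C_{\mathrm{med}}L^2\eta^2\tau_{\max}^2G^2$. Since $\eta\le 1/(4L)$, this piece is at most $C_{\mathrm{med}}\tau_{\max}^2G^2/16$, which is uniformly bounded. It is also at most $C_{\mathrm{med}}L^2c^2\tau_{\max}^2G^2/T$, which is $\mathcal O(1/\sqrt T)$. So whether the staleness contribution is booked inside $\mathcal O(E_{\mathrm{track}}^2)$ or moved into the $\mathcal O(1/\sqrt T)$ term, the claimed form holds. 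I would state explicitly which convention is used: the $\eta$-dependent part goes to the vanishing term and the rest, $C_{\mathrm{med}}(\zeta^2+\varepsilon_{\mathrm{est}}^2+\sigma^2)$, stays as the floor.

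There is no real obstacle; the argument is bookkeeping. The only point needing care is that $E_{\mathrm{track}}$ itself depends on $\eta$, and the two-regime bound on $\eta^2$ above resolves it. Summing the two estimates gives the corollary.
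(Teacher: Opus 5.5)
Your proposal is correct and is the argument the paper intends. The paper gives no separate proof of the corollary; it treats it as a direct substitution of $\eta=\min\{\frac{1}{4L},\frac{c}{\sqrt{T}}\}$ into Eq.~\eqref{eq:main_rate}, and your bound $\frac{1}{\eta}\le 4L+\frac{\sqrt{T}}{c}$ carries out that substitution explicitly. You also observe that the staleness piece $C_{\mathrm{med}}L^2\eta^2\tau_{\max}^2G^2$ inside $E_{\mathrm{track}}^2$ is $\mathcal{O}(1/T)$ under this stepsize, which is a sound refinement the paper leaves implicit.
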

\begin{remark}
Assumption~\ref{ass:byz_fraction} is required because the server uses coordinate-wise median aggregation; a strict benign majority in the aggregated set $\mathcal{V}_t$ is sufficient to ensure median robustness in each coordinate. Since \alg may aggregate $n-1$ vectors (when the received update is rejected), a sufficient condition is $b<(n-1)/2$.
The $\ell_2$-norm clipping applied at $t = 0$ (Algorithm~\ref{our_alg}) serves only as initialization and is consistent with the boundedness condition in Assumption~\ref{ass:filter_bounded}; hence it does not affect the convergence analysis for $t \geq 1$.
\end{remark}

\begin{remark}
The term $E_{\mathrm{track}}^2$ quantifies the combined effect of (i) inter-client heterogeneity ($\zeta^2$), (ii) estimation error for non-uploading clients ($\varepsilon_{\mathrm{est}}^2$), (iii) stochastic noise from the single fresh upload ($\sigma^2$), and (iv) asynchrony through staleness ($L^2\eta^2\tau_{\max}^2 G^2$).
\end{remark}


\section{Experimental Evaluation}  \label{sec:exp}

\subsection{Experimental Setup}
\subsubsection{Datasets}

\begin{table}[t]
	\caption{CNN architecture.}
	\centering
	\begin{tabular}{|c|c|} \hline 
		{Layer} & {Size} \\ \hline
		{Input} & { $28\times28\times1$}\\ \hline
		{Convolution + ReLU} & { $3\times3\times30$}\\ \hline
		{Max Pooling} & { $2\times2$}\\ \hline
		{Convolution + ReLU} & { $3\times3\times50$}\\ \hline
		{Max Pooling} & { $2\times2$}\\ \hline
		{Fully Connected + ReLU} & {100}\\ \hline
		{Softmax} & {10}\\ \hline
	\end{tabular}
	\label{cnn_arch}
\end{table}

We conducted experiments on a diverse selection of datasets. These include Fashion-MNIST~\cite{xiao2017online}, CIFAR-10~\cite{cifar10data}, CIFAR-100~\cite{cifar10data}, and Tiny-ImageNet~\cite{deng2009imagenet} for image classification, along with the Udacity dataset~\cite{Udacity}, which contains real-world data from autonomous driving environments.
Comprehensive information about these datasets can be found in Appendix~\ref{app_dataset}.

\subsubsection{Poisoning Attacks}

Our evaluation incorporates five untargeted attacks, including label flipping attack~\cite{tolpegin2020data}, SignFlip attack~\cite{fang2020local}, Gaussian attack~\cite{blanchard2017machine}, Min-Max attack~\cite{shejwalkar2021manipulating}, and Adaptive attack~\cite{shejwalkar2021manipulating}, as well as five targeted attacks (e.g., backdoor attacks), namely Scaling attack~\cite{bagdasaryan2020backdoor}, DBA attack~\cite{xie2019dba}, Projected gradient descent attack~\cite{sun2019can}, Neurotoxin attack~\cite{zhang2022neurotoxin}, and 3DFed attack~\cite{li20233dfed}. 
Detailed descriptions of these poisoning attacks are provided in Appendix~\ref{app_attack}.

\begin{table*}[h]
\caption{Performance of different methods on the Fashion-MNIST, CIFAR-10, CIFAR-100, and Tiny-ImageNet datasets. For untargeted attacks, results are reported as TER, while for targeted attacks (e.g., backdoor attacks), results are reported as TER/ASR. Lower values indicate better defense performance. Results on the Udacity dataset are reported in Table~\ref{main_result_Udacity}.}
\centering
\addtolength{\tabcolsep}{-1.58pt}
  \subfloat[Fashion-MNIST dataset.]
  {
    \begin{tabular}{|c|c|c|c|c|c|c|c|c|c|c|c|}
    \hline
       Method    & \text{No attack} & \text{Labelflip} & \text{Signflip} & \text{Gaussian} & \text{Scaling} & \text{DBA} & \text{PGD} & \text{Neurotoxin} & \text{3DFed} & \text{Min-Max} & \text{Adaptive} \\ \hline
            \text{AsyncSGD} & 0.12 & 0.12 & 0.41 & 0.52 & 0.41/0.69 & 0.53/0.60 & 0.28/0.36 & 0.25/0.25 & 0.90/1.00 & 0.90 & 0.90 \\ \hline
            \text{Kardam} & 0.17 & 0.17 & 0.18 & 0.15 & 0.79/0.92 & 0.45/0.54 & 0.36/0.36 & 0.35/0.36 & 0.35/0.36 & 0.62 & 0.90 \\ \hline
            \text{BASGD} & 0.14 & 0.16 & 0.18 & 0.73 & 0.36/0.77 & 0.33/0.65 & 0.46/0.53 & 0.34/0.39 & 0.35/1.00 & 0.79 & 0.90 \\ \hline
            \text{Sageflow} & 0.14 & 0.14 & 0.16 & 0.90 & 0.70/0.85 & 0.60/0.69 & 0.31/0.41 & 0.32/0.35 & 0.90/0.88 & 0.90 & 0.90 \\ \hline
            \text{Zeno++} & 0.17 & 0.18 & 0.19 & 0.17 & 0.21/0.24 & 0.21/0.21 & 0.21/0.21 & 0.22/0.21 & 0.21/0.20 & 0.30 & 0.33 \\ \hline
            \text{AFLGuard} & 0.22 & 0.27 & 0.34 & 0.26 & 0.28/0.36 & 0.26/0.36 & 0.26/0.15 & 0.26/0.20 & 0.26/0.12 & 0.33 & 0.35 \\ \hline
            \rowcolor{greyL}
            \alg & 0.12 & 0.12 & 0.13 & 0.14 & 0.18/0.07 & 0.17/0.05 & 0.16/0.07 & 0.15/0.06 & 0.17/0.03 & 0.15 & 0.18 \\ \hline
    \end{tabular}
    }
    \vspace{-0.08in}
    \\
  \subfloat[CIFAR-10 dataset.]
  {
        \begin{tabular}{|c|c|c|c|c|c|c|c|c|c|c|c|}
        \hline
              Method     & \text{No attack} & \text{Labelflip} & \text{Signflip} & \text{Gaussian} & \text{Scaling} & \text{DBA} & \text{PGD} & \text{Neurotoxin} & \text{3DFed} & \text{Min-Max} & \text{Adaptive} \\ \hline
        \text{AsyncSGD} & 0.22 & 0.37 & 0.48 & 0.80 & 0.85/1.00 & 0.86/1.00 & 0.70/0.97 & 0.56/0.88 & 0.90/1.00 & 0.84 & 0.86 \\ \hline
        \text{Kardam} & 0.25 & 0.33 & 0.33 & 0.45 & 0.56/0.42 & 0.90/1.00 & 0.49/0.68 & 0.71/0.71 & 0.48/0.29 & 0.82 & 0.87 \\ \hline
        \text{BASGD} & 0.29 & 0.39 & 0.42 & 0.81 & 0.87/0.97 & 0.85/0.97 & 0.73/0.43 & 0.70/0.56 & 0.90/1.00 & 0.88 & 0.89 \\ \hline
        \text{Sageflow} & 0.28 & 0.48 & 0.58 & 0.83 & 0.89/1.00 & 0.87/1.00 & 0.85/0.99 & 0.79/0.92 & 0.90/1.00 & 0.85 & 0.81 \\ \hline   
        \text{Zeno++} & 0.23 & 0.32 & 0.33 & 0.33 & 0.63/0.54 & 0.56/0.51 & 0.52/0.49 & 0.43/0.37 & 0.53/0.39 & 0.34 & 0.32 \\ \hline
        \text{AFLGuard} & 0.25 & 0.36 & 0.37 & 0.37 & 0.44/0.26 & 0.42/0.30 & 0.48/0.32 & 0.45/0.32 & 0.46/0.17 & 0.39 & 0.37 \\ \hline
        \rowcolor{greyL}
        \alg & 0.23 & 0.29 & 0.29 & 0.30 & 0.29/0.03 & 0.26/0.05 & 0.30/0.10 & 0.26/0.07 & 0.28/0.01 & 0.32 & 0.35 \\ \hline
        \end{tabular}
  }
  \\
    \vspace{-0.08in}
    \subfloat[CIFAR-100 dataset.]
  {
    \begin{tabular}{|c|c|c|c|c|c|c|c|c|c|c|c|}
    \hline
       Method    & \text{No attack} & \text{Labelflip} & \text{Signflip} & \text{Gaussian} & \text{Scaling} & \text{DBA} & \text{PGD} & \text{Neurotoxin} & \text{3DFed} & \text{Min-Max} & \text{Adaptive} \\ \hline
        \text{AsyncSGD} & 0.43 & 0.53 & 0.66 & 0.90 & 0.95/1.00 & 0.95/1.00 & 0.67/0.52 & 0.61/0.61 & 0.95/1.00 & 0.95 & 0.94 \\ \hline
        \text{Kardam} & 0.47 & 0.48 & 0.54 & 0.61 & 0.93/1.00 & 0.93/1.00 & 0.71/0.54 & 0.68/0.38 & 0.63/0.37 & 0.95 & 0.91 \\ \hline
        \text{BASGD} & 0.52 & 0.53 & 0.64 & 0.90 & 0.89/0.95 & 0.91/0.94 & 0.79/0.26 & 0.76/0.15 & 0.95/1.00 & 0.95 & 0.95 \\ \hline
        \text{Sageflow} & 0.56 & 0.59 & 0.75 & 0.91 & 0.87/0.97 & 0.89/0.99 & 0.68/0.13 & 0.61/0.13 & 0.95/1.00 & 0.94 & 0.93 \\ \hline   
        \text{Zeno++} & 0.48 & 0.51 & 0.51 & 0.60 & 0.21/0.68 & 0.63/0.15 & 0.72/0.19 & 0.74/0.26 & 0.67/0.17 & 0.63 & 0.83 \\ \hline
        \text{AFLGuard} & 0.43 & 0.54 & 0.50 & 0.47 & 0.65/0.18 & 0.67/0.48 & 0.67/0.74 & 0.70/0.38 & 0.68/0.53 & 0.92 & 0.85 \\ \hline
        \rowcolor{greyL}
        \alg & 0.43 & 0.46 & 0.49 & 0.44 & 0.53/0.03 & 0.52/0.07 & 0.49/0.02 & 0.53/0.09 & 0.51/0.05 & 0.53 & 0.56 \\ \hline
        \end{tabular}
  }
    \vspace{-0.08in}
    \subfloat[Tiny-ImageNet dataset.]
    {
    \begin{tabular}{|c|c|c|c|c|c|c|c|c|c|c|c|}
    \hline
           Method    & \text{No attack} & \text{Labelflip} & \text{Signflip} & \text{Gaussian} & \text{Scaling} & \text{DBA} & \text{PGD} & \text{Neurotoxin} & \text{3DFed} & \text{Min-Max} & \text{Adaptive} \\ \hline
    \text{AsyncSGD} & 0.61 & 0.68 & 0.75 & 0.98 & 0.62/1.00 & 0.62/1.00 & 0.62/0.44 & 0.62/0.77 & 0.62/1.00 & 0.98 & 0.60 \\ \hline
    \text{Kardam} & 0.63 & 0.66 & 0.72 & 0.76 & 0.76/0.03 & 0.76/0.05 & 0.75/0.05 & 0.76/0.03 & 0.76/0.03 & 0.98 & 0.63 \\ \hline
    \text{BASGD} & 0.63 & 0.64 & 0.73 & 0.64 & 0.56/0.39 & 0.56/0.36 & 0.56/0.08 & 0.56/0.10 & 0.56/0.02 & 0.98 & 0.63 \\ \hline
    \text{Sageflow} & 0.63 & 0.68 & 0.74 & 0.99 & 0.98/0.97 & 0.98/0.98 & 0.83/0.43 & 0.83/0.36 & 0.99/1.00 & 0.99 & 0.99 \\ \hline    
    \text{Zeno++} & 0.62 & 0.65 & 0.65 & 0.66 & 0.64/0.03 & 0.65/0.05 & 0.63/0.03 & 0.65/0.05 & 0.63/0.03 & 0.67 & 0.64 \\ \hline
    \text{AFLGuard} & 0.60 & 0.68 & 0.72 & 0.63 & 0.61/0.03 & 0.62/0.03 & 0.62/0.02 & 0.60/0.05 & 0.60/0.03 & 0.62 & 0.65 \\ \hline
    \rowcolor{greyL}
    \alg & 0.58 & 0.60 & 0.66 & 0.62 & 0.61/0.03 & 0.61/0.03 & 0.62/0.03 & 0.62/0.03 & 0.64/0.03 & 0.60 & 0.62 \\ \hline
    \end{tabular}
  }
  \label{main_result}
  \vspace{-.1in}
\end{table*}

\begin{table}[h]
\centering
\addtolength{\tabcolsep}{-2.5pt}
\caption{RMSE of different methods on Udacity dataset.}
\begin{tabular}{|c|c|c|c|c|c|}
\hline
     Method      & No attack & Signflip & Gaussian & Min-Max & Adaptive \\ \hline
AsyncSGD   & 0.17      & 0.29     & 1.10      & 0.36    & 0.33             \\ \hline
Kardam     & 0.18      & 0.28     & 0.19     & 0.55    & 0.24             \\ \hline
BASGD      & 0.17      & 0.19     & 0.19     & inf     & 0.20             \\ \hline
Zeno++     & 0.17      & inf      & 0.24     & inf     & inf              \\ \hline
AFLGuard   & 0.18      & 0.43     & 0.25     & 0.19    & 0.26             \\ \hline
\rowcolor{greyL}
\alg & 0.17    & 0.17     & 0.18     & 0.17    & 0.17             \\ \hline
\end{tabular}
  \label{main_result_Udacity}
  \vspace{-.1in}
\end{table}

\subsubsection{Compared Methods}
By default, we compare \alg with six baselines: AsyncSGD~\cite{zheng2017asynchronous}, Kardam~\cite{damaskinos2018asynchronous}, BASGD~\cite{yang2021basgd}, Sageflow~\cite{park2021sageflow}, Zeno++~\cite{xie2020zeno++}, and AFLGuard~\cite{fang2022aflguard}. 
Details of these six methods are provided in Appendix~\ref{app_method}.
Note that we also include a comparison between \alg and the more recent defense AsyncDefender~\cite{bai2025asyncdefender}, as reported in Table~\ref{tab:asyncdefender}.

\subsubsection{Evaluation Metrics}

We evaluate defense effectiveness using task-specific metrics: testing error rate (TER) and attack success rate (ASR) for image classification (Fashion-MNIST, CIFAR-10, CIFAR-100, and Tiny-ImageNet), and root mean squared error (RMSE) for regression (Udacity). TER measures the proportion of clean test samples that are misclassified, while ASR quantifies the fraction of targeted examples incorrectly classified into the target label. For regression, RMSE is defined as $\text{RMSE} = \sqrt{\frac{1}{M} \sum_{i=1}^{M} (\bar{y}_i - y_i)^2}$, where $\bar{y}_i$ and $y_i$ denote the predicted and true values, respectively, and $M$ is the number of test instances. We exclude targeted attacks on Udacity, as existing attack methods are not designed for regression tasks. For TER, ASR, and RMSE, lower values indicate better defense performance.

\subsubsection{Non-IID Setting, and Parameter Settings} \label{setups}
A fundamental characteristic of FL is the non-independent and non-identically distributed (Non-IID) nature of client training data. Following~\cite{fang2020local}, we simulate Non-IID distributions for the Fashion-MNIST, CIFAR-10, CIFAR-100, and Tiny-ImageNet datasets as follows: Given a dataset with \( z \) classes, clients are randomly divided into \( z \) groups. A training sample with label \( q \) is assigned to clients in group \( q \) with probability \( x \), while clients in other groups receive it with probability \( (1 - x) / (z - 1) \). A higher \( x \) value increases the Non-IID nature of the data distribution, and we set \( x = 0.5 \) by default. Notably, the Udacity dataset inherently exhibits Non-IID characteristics, eliminating the need for additional simulation.

We set up 50 clients for Fashion-MNIST, CIFAR-10, and CIFAR-100 datasets, and 10 clients for Tiny-ImageNet and Udacity. In the Udacity dataset, each client represents an autonomous driving company. By default, 20\% of clients are malicious. For backdoor attacks, we insert a $4 \times 4$ square trigger with random pixel values in the bottom-right corner of each image across all datasets. The batch size is 32 for Fashion-MNIST and 64 for the others. We train for 20,000 rounds on Fashion-MNIST and Udacity, 30,000 on CIFAR-10 and CIFAR-100, and 80,000 on Tiny-ImageNet, using a learning rate of 0.01. A CNN is used for Fashion-MNIST (see Table~\ref{cnn_arch} for the CNN architecture), and ResNet-20~\cite{he2016deep} for the rest. 
To simulate asynchronous FL, following ~\cite{fang2022aflguard,xie2020zeno++}, the server maintains all previous global models. In each round, the server randomly selects a client and uniformly samples a global model from the past $[0, \tau_{\max}]$ rounds, where $\tau_{\max}$ is the maximum delay and is set to 10 by default. The sampled global model is then sent to the selected client for local training.
For all attack settings, attacks start from the first round and persist in every subsequent round. For methods such as~\cite{park2021sageflow,xie2020zeno++,fang2022aflguard} that require a server-side trusted dataset, we uniformly sample 100 clean examples from the union of all clients’ local training data, following prior work~\cite{fang2022aflguard}.
In \alg, we set $\alpha = 0.8$ and buffer size $\epsilon = 3$.
Following~\cite{karimireddy2021learning}, we use grid search to determine the value of $G$. Specifically, we search over $G \in \{20, 50, 100\}$. We observe that the performance is stable across different choices of $G$, indicating that the method is not sensitive to this parameter. Based on this observation, we fix $G = 50$ in all experiments.
Experiments were conducted using NVIDIA V100 GPUs, with each test executed five times. 
We report the average results, 
and the standard deviations are all within 0.03.

\subsection{Experimental Results}

\begin{figure*}[t]
	\centering
	\includegraphics[scale = 0.35]{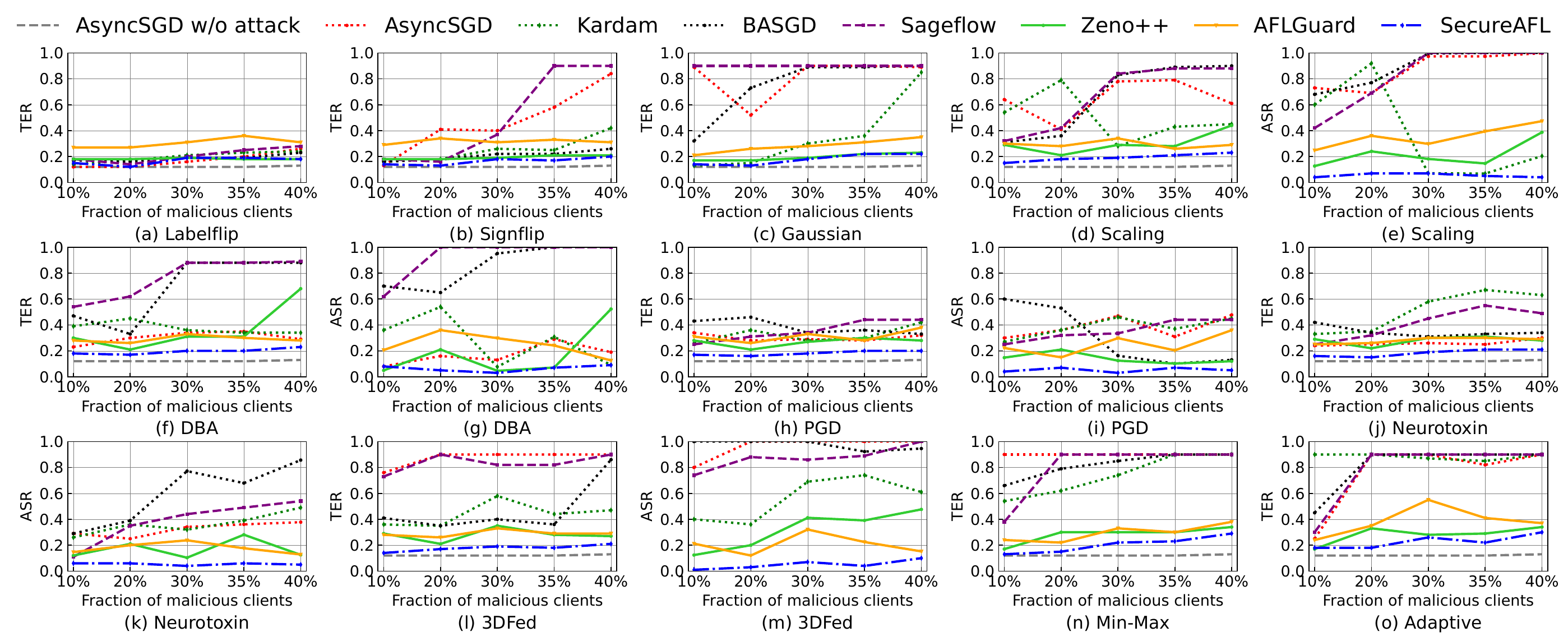}
	\caption{Impact of fraction of malicious clients on Fashion-MNIST dataset.}
	\label{malicious_fraction}
\end{figure*}

\myparatight{\alg is Effective}%
Table~\ref{main_result} summarizes the performance of all compared methods on four image classification benchmarks, while Table~\ref{main_result_Udacity} reports results on the Udacity regression task. Across all datasets, \alg consistently matches or closely tracks the performance of AsyncSGD in benign settings, demonstrating that the proposed defense does not introduce unnecessary bias or degradation when no adversarial behavior is present. For example, under the “No attack” condition, \alg achieves testing error rates of 0.12 on Fashion-MNIST, 0.23 on CIFAR-10, 0.43 on CIFAR-100, and 0.58 on Tiny-ImageNet, all of which are comparable to or slightly better than those of AsyncSGD and other baselines. This confirms that the filtering, estimation, and robust aggregation components of \alg do not hinder convergence or accuracy in non-adversarial environments. On the Udacity dataset, \alg similarly attains an RMSE of 0.17, matching the best-performing baselines and indicating that the framework generalizes well beyond classification tasks to regression scenarios 

In adversarial settings, \alg exhibits clear and consistent advantages over existing asynchronous FL defenses. Under untargeted attacks such as Labelflip, Signflip, and Gaussian noise, \alg maintains low and stable error rates across all datasets, whereas AsyncSGD and several baselines experience severe performance degradation. For instance, on CIFAR-10 under the Gaussian attack, \alg limits the testing error to 0.30, compared to 0.80 for AsyncSGD and over 0.80 for BASGD. Similar trends are observed on CIFAR-100 and Tiny-ImageNet, where \alg substantially suppresses error growth even when attacks are strong. For targeted backdoor attacks, \alg is particularly effective in simultaneously controlling both testing error rate (TER) and attack success rate (ASR). Across Scaling, DBA, PGD, Neurotoxin, and 3DFed attacks, \alg consistently reduces ASR to near-zero levels (often below 0.10), while maintaining TER close to benign performance. This dual robustness is notably absent in other defenses, many of which either fail to suppress ASR or do so at the cost of significantly inflated TER. The effectiveness of \alg is further corroborated by Figure~\ref{malicious_fraction}, which shows that even as the fraction of malicious clients increases to 40\%, \alg sustains the lowest test error and attack success rates among all methods. These results collectively demonstrate that \alg provides strong and reliable protection against a wide spectrum of poisoning attacks, including adaptive adversaries, while preserving high model utility across diverse datasets and system conditions.

\begin{figure*}[!t]
	\centering
	\includegraphics[scale = 0.35]{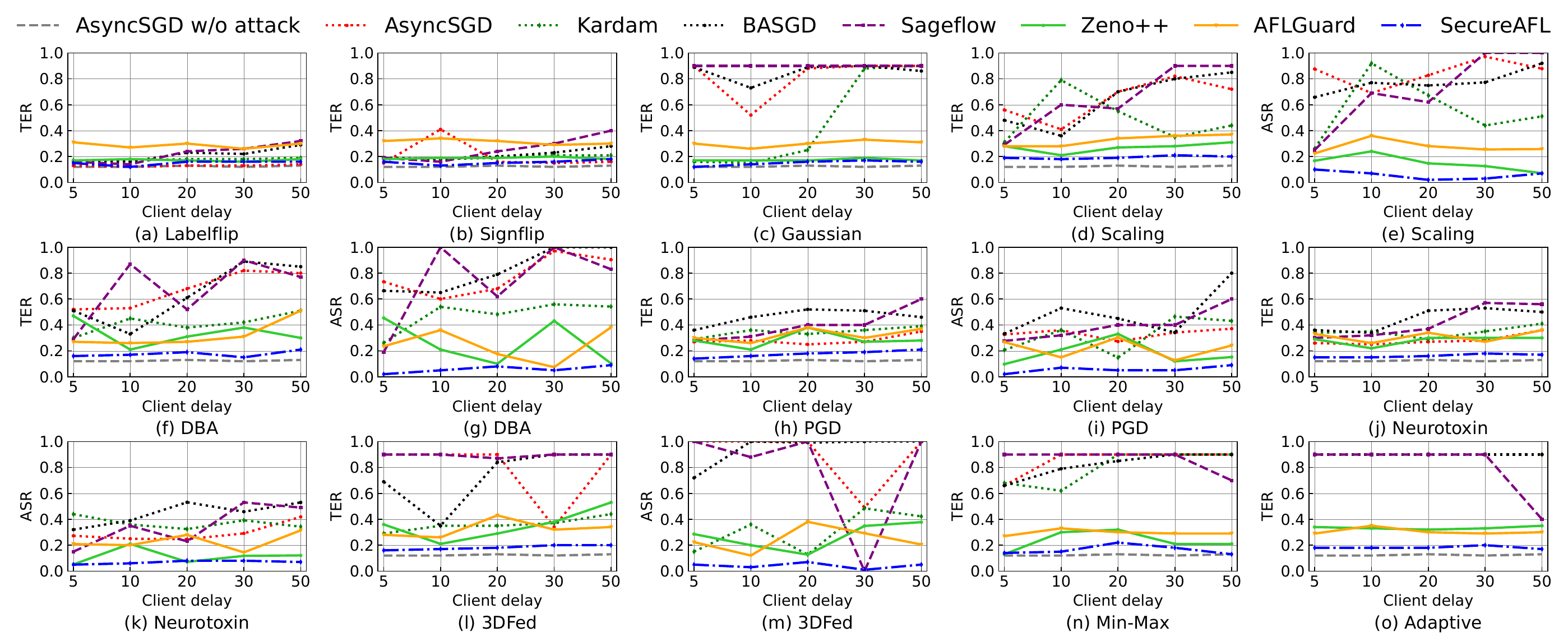}
	\caption{Impact of client delay on Fashion-MNIST dataset.}
	\label{delay_impact}
\end{figure*}

\myparatight{Impact of the fraction of malicious clients}%
Fig.~\ref{malicious_fraction} evaluates the robustness of \alg as the proportion of malicious clients increases from 10\% to 40\% on the Fashion-MNIST dataset under a wide range of poisoning attacks. As the fraction of malicious participants grows, most baseline methods exhibit a rapid deterioration in performance, reflected by sharply increasing testing error rates and, for targeted attacks, near-saturated attack success rates. In contrast, \alg demonstrates a markedly slower degradation trend and consistently maintains superior performance across all evaluated attack types. Even at moderate adversarial levels (20\%–30\% malicious clients), \alg preserves testing error rates close to those observed in non-adversarial AsyncSGD, while competing defenses such as Kardam and BASGD already show substantial instability. When the malicious fraction reaches 40\%, a particularly challenging regime for Byzantine-robust FL, \alg continues to outperform all baselines, achieving the lowest testing error among the compared methods and the lowest attack success rates for backdoor-based attacks. Notably, while AsyncSGD and Sageflow experience near-complete compromise under strong attacks such as Scaling, DBA, and PGD, \alg effectively suppresses malicious influence, preventing both widespread accuracy collapse and targeted misclassification. These results indicate that \alg scales robustly with adversarial strength and can tolerate a high proportion of malicious clients, highlighting the effectiveness of its combined filtering, update estimation, and robust aggregation mechanisms in heavily adversarial asynchronous FL environments.

\myparatight{Impact of the client delay}%
Fig.~\ref{delay_impact} investigates the robustness of different asynchronous FL defenses as the maximum client delay increases from 5 to 50 on the Fashion-MNIST dataset under various poisoning attacks. As client delays grow, the adverse effects of update staleness become increasingly pronounced, significantly challenging the stability of many baseline methods. In particular, Kardam and BASGD exhibit strong sensitivity to delayed updates, with their testing error rates escalating rapidly as delays exceed moderate levels. For example, under the Gaussian attack, Kardam’s testing error increases sharply when the maximum delay rises beyond 20, eventually approaching near-random performance. Similar instability trends are observed for BASGD across multiple attack scenarios, indicating that buffering-based or deviation-threshold defenses struggle to reliably distinguish benign but stale updates from malicious ones in highly asynchronous environments.

In contrast, \alg maintains consistently stable performance across the entire range of client delays. Even when the maximum delay reaches 50, \alg preserves low testing error rates and, for targeted attacks, low attack success rates, demonstrating strong resilience to severe asynchrony. This robustness stems from \alg’s explicit modeling of delayed and missing client updates through historical estimation, which mitigates the impact of stale information, as well as its Lipschitz-based filtering mechanism that remains effective regardless of delay magnitude. Compared to methods such as Sageflow, Zeno++, and AFLGuard, whose performance degrades notably as delays increase, often due to their reliance on trusted data or sensitivity to stale gradients, \alg consistently delivers superior stability. Overall, these results highlight \alg’s ability to handle extreme client delays, making it particularly well-suited for real-world FL deployments characterized by heterogeneous computation and communication latencies.

\begin{figure*}[!t]
	\centering
	\includegraphics[scale = 0.35]{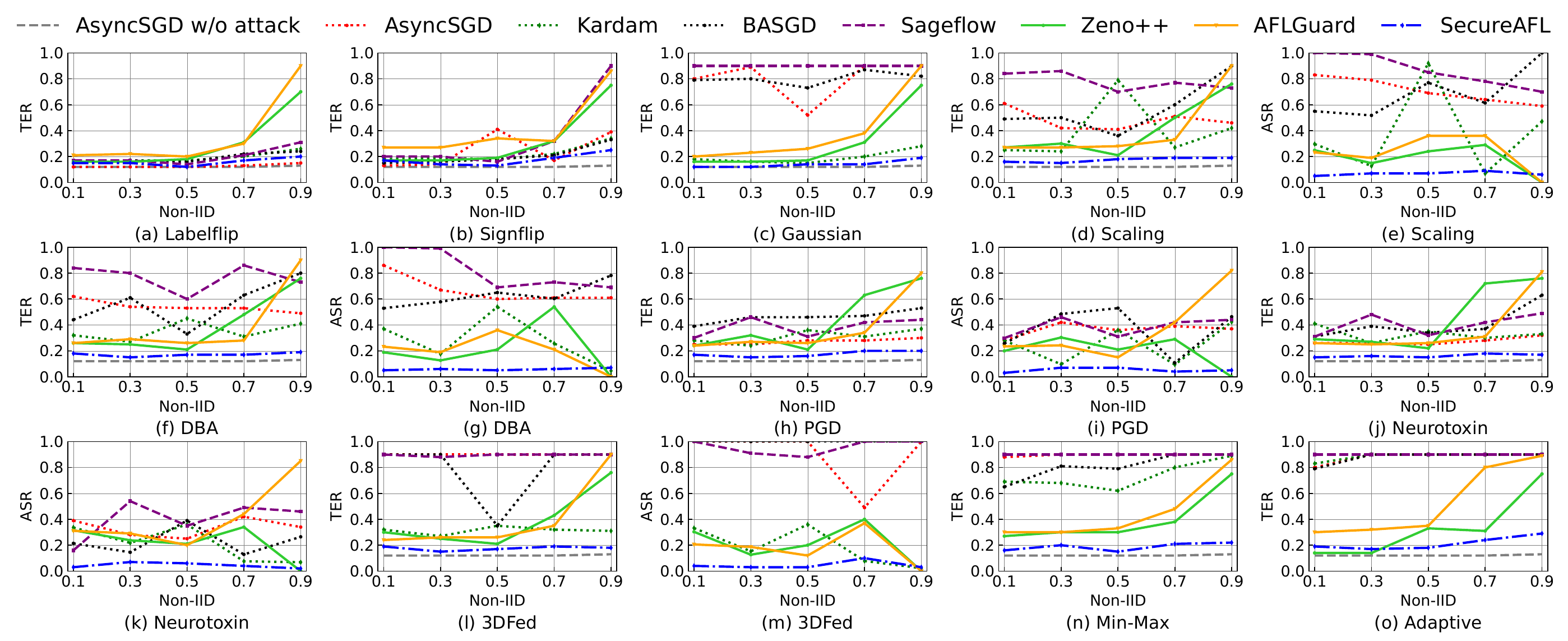}
	\caption{Impact of degree of Non-IID on Fashion-MNIST dataset.}
	\label{delay_non_iid}
\end{figure*}

\myparatight{Impact of total number of clients}%
Fig.~\ref{number_clients} in the Appendix examines the scalability of \alg by varying the total number of participating clients from 50 to 300, while fixing the fraction of malicious clients at 20\%. As the system scale increases, the learning dynamics of asynchronous FL become more complex due to higher heterogeneity, increased asynchrony, and a larger volume of potentially malicious updates. Many baseline methods struggle under these conditions, exhibiting noticeable performance fluctuations as the client population grows. In particular, defenses such as Kardam and BASGD show unstable behavior, with testing error rates increasing or oscillating as the number of clients rises, indicating limited scalability in large-scale federated settings.

In contrast, \alg demonstrates consistently stable and robust performance across all evaluated client counts. Its testing error remains largely unchanged as the number of clients increases, indicating that the proposed framework effectively scales with system size. This stability can be attributed to \alg’s design, which aggregates both received and estimated updates using a Byzantine-robust rule, ensuring that the influence of malicious clients does not grow disproportionately with the number of participants. Moreover, the update estimation mechanism allows \alg to maintain a balanced aggregation process even when only a subset of clients actively contributes at each round, a scenario that becomes increasingly common in large-scale asynchronous systems. Overall, these results confirm that \alg scales effectively to larger FL deployments and remains resilient to poisoning attacks even as the number of participating clients grows substantially.

\begin{table*}[h]
\caption{Impact of $\alpha$ on Fashion-MNIST dataset.}
\centering
\begin{tabular}{|c|c|c|c|c|c|c|c|c|c|c|c|}
\hline
{$\alpha$} & {No attack} & {Labelflip} & {Signflip} & {Gaussian} & {Scaling} & {DBA} & {PGD} & {Neurotoxin} & {3DFed} & {Min-Max} & {Adaptive} \\ \hline

0.4 & 0.13 & 0.15 & 0.14 & 0.14 & 0.19/0.02 & 0.16/0.02 & 0.17/0.07 & 0.19/0.08 & 0.19/0.10 & 0.22 & 0.23 \\ \hline
0.6 & 0.15 & 0.16 & 0.15 & 0.14 & 0.18/0.08 & 0.17/0.03 & 0.19/0.10 & 0.20/0.01 & 0.17/0.06 & 0.20 & 0.23 \\ \hline
0.8 & 0.12 & 0.12 & 0.13 & 0.14 & 0.18/0.07 & 0.17/0.05 & 0.16/0.07 & 0.15/0.06 & 0.17/0.03 & 0.15 & 0.18 \\ \hline
\end{tabular}
\label{alpha_impact}
\end{table*}

\begin{table*}[h]
\caption{Different variants of \alg on Fashion-MNIST dataset.}
\centering
\addtolength{\tabcolsep}{-1.2pt}
\begin{tabular}{|c|c|c|c|c|c|c|c|c|c|c|c|}
\hline
{Variant} & {No attack} & {Labelflip} & {Signflip} & {Gaussian} & {Scaling} & {DBA} & {PGD} & {Neurotoxin} & {3DFed} & {Min-Max} & {Adaptive} \\ \hline
\text{Variant I} & 0.12 & 0.14 & 0.15 & 0.14 & 0.40/0.36 & 0.34/0.33 & 0.40/0.19 & 0.30/0.22 & 0.38/0.34 & 0.35 & 0.35 \\ \hline
\text{Variant II} & 0.17 & 0.17 & 0.20 & 0.18 & 0.72/0.90 & 0.44/0.58 & 0.34/0.38 & 0.37/0.34 & 0.35/0.40 & 0.87 & 0.90 \\ \hline
\text{Variant III} & 0.16 & 0.16 & 0.19 & 0.20 & 0.34/0.41 & 0.68/0.53 & 0.30/0.25 & 0.29/0.25 & 0.46/0.47 & 0.90 & 0.90 \\ \hline
\text{Variant IV} & 0.41 & 0.56 & 0.66 & 0.47 & 0.57/0.49 & 0.53/0.53 & 0.59/0.49 & 0.48/0.42 & 0.58/0.46 & 0.45 & 0.63 \\ \hline
    \rowcolor{greyL}
\text{\alg} & 0.12 & 0.12 & 0.13 & 0.14 & 0.18/0.07 & 0.17/0.05 & 0.16/0.07 & 0.15/0.06 & 0.17/0.03 & 0.15 & 0.18 \\ \hline
\end{tabular}
\label{variant_result}
\end{table*}

\myparatight{Impact of degree of Non-IID}%
Fig.~\ref{delay_non_iid} evaluates the robustness of \alg under varying degrees of data heterogeneity by increasing the Non-IID parameter, which controls how unevenly class distributions are partitioned across clients. As the Non-IID level intensifies, client updates become more diverse and less representative of the global data distribution, significantly complicating the task of distinguishing benign updates from adversarial ones. Under mild heterogeneity, most methods maintain reasonable performance; however, as the Non-IID degree increases to more challenging levels (e.g., 0.7 and 0.9), several baseline defenses experience severe degradation. In particular, Zeno++ and AFLGuard become highly vulnerable even to relatively simple poisoning attacks, with testing error rates rising sharply. This failure can be attributed to their reliance on a trusted dataset at the server, whose distribution becomes increasingly misaligned with client data as heterogeneity grows, rendering similarity-based or reference-gradient checks ineffective.

In contrast, \alg demonstrates strong resilience across all evaluated Non-IID levels. Its testing error remains stable and consistently lower than those of competing methods, even under extreme heterogeneity. This robustness stems from the fact that \alg does not depend on any auxiliary trusted dataset or global reference update; instead, it leverages historical client update trajectories and Lipschitz-based smoothness constraints that naturally adapt to heterogeneous data distributions. As a result, benign but highly diverse client updates are preserved, while anomalous and malicious updates are effectively filtered out. These results indicate that \alg is particularly well-suited for real-world FL scenarios, where data heterogeneity is often severe and unavoidable, and further highlight its advantage over defenses that implicitly assume near-IID data distributions.

\myparatight{Impact of $\alpha$}%
In \alg, the parameter $\alpha$ determines the strictness of the Lipschitz-based filtering mechanism by specifying the percentile threshold used to classify received updates as benign or anomalous. A smaller $\alpha$ corresponds to a more conservative filter that rejects a larger portion of updates, while a larger $\alpha$ relaxes the filtering criterion and allows more updates to pass through. Table~\ref{alpha_impact} reports the performance of \alg under different choices of $\alpha$, demonstrating that the framework remains robust across a wide range of values. When $\alpha$ is set to lower values, \alg slightly increases its rejection rate, which can marginally affect convergence speed but provides strong protection against aggressive poisoning attacks. Conversely, higher $\alpha$ values admit more updates, improving learning efficiency while still maintaining strong robustness due to the subsequent estimation and Byzantine-robust aggregation steps.

Importantly, the results indicate that \alg is not overly sensitive to the precise tuning of $\alpha$. Across all evaluated values, \alg consistently achieves low testing error rates and, for targeted attacks, maintains very low attack success rates. This stability suggests that the Lipschitz-filtering mechanism effectively captures the normal smoothness patterns of benign client updates, even when the threshold is moderately relaxed or tightened. Moreover, the complementary design of \alg ensures that even if some malicious updates bypass the filter at higher $\alpha$ values, their influence is further mitigated by the update estimation process and coordinate-wise median aggregation. Overall, these findings show that \alg is robust to the choice of $\alpha$ and can be reliably deployed without requiring fine-grained parameter tuning, which is particularly advantageous in practical FL systems where attack characteristics are unknown in advance.

\myparatight{Different variants of \alg}%
In this section, we explore different variants of our proposed \alg.

\begin{list}{\labelitemi}{\leftmargin=1.15em \itemindent=-0.08em \itemsep=.1em}

\item {\bf Variant I:} The server uses the FedAvg rule~\cite{mcmahan2017communication} in Eq.~(\ref{agg_benign}) and Eq.~(\ref{agg_est}).

\item {\bf Variant II:} The server does not estimate updates from the remaining \( n-1 \) clients. It updates the global model using \(\bm{g}_i^{t-\tau_i}\) only if deemed benign; otherwise, no update is applied. 

\item {\bf Variant III:} The server does not discard malicious updates but continues estimating updates from the other \( n-1 \) clients. It then integrates both received and estimated updates using Eq.~(\ref{agg_benign}).

\item {\bf Variant IV:} Server disregards all received updates while still estimating updates from the remaining \( n-1 \) clients. It then combines these estimated updates by Eq.~(\ref{agg_est}).
\end{list}

Table~\ref{variant_result} compares the performance of \alg and its variants on the Fashion-MNIST dataset. The results emphasize the critical role of filtering malicious updates, incorporating estimated updates, and utilizing a robust aggregation strategy like Median. These components collectively enhance the effectiveness of \alg, demonstrating their superiority over alternative configurations.

\myparatight{Empirical validation of bounded estimation error}%
To empirically validate the bounded estimation error in Assumption~\ref{ass:est}, we measure the relative estimation error $\frac{\left\|\hat{{\bm{g}}}^t - {\bm{g}}^t\right\|_2}{\left\|{\bm{g}}^t\right\|_2}$ on the Fashion-MNIST dataset throughout training, where $\hat{{\bm{g}}}^t$ denotes the estimated update and ${\bm{g}}^t$ denotes the corresponding ground-truth update at round $t$. For each round interval, we compute the error at every round and report the average over the interval. As shown in Table~\ref{tab:estimation_error}, the relative error decreases steadily as training progresses and remains bounded, which empirically supports Assumption~\ref{ass:est} underlying our theoretical analysis.

\myparatight{Comparison with AsyncDefender~\cite{bai2025asyncdefender}}%
We further compare \alg with AsyncDefender~\cite{bai2025asyncdefender}. Table~\ref{tab:asyncdefender} in the Appendix reports the results of AsyncDefender under various attacks across five datasets. The symbol ``--'' for the Udacity dataset indicates that these attacks are not applicable, as Udacity is a regression task. 
Based on Tables~\ref{main_result}, \ref{main_result_Udacity}, and~\ref{tab:asyncdefender}, \alg consistently outperforms AsyncDefender across all evaluated settings.


\section{Limitations} 
\label{sec:discussion}

\myparatight{Privacy concern of \alg}%
\alg introduces a local update estimation mechanism that reconstructs missing client updates using historical global models and previously received gradients. While this estimation is performed entirely on the server side and does not require access to clients’ raw data, it nevertheless raises a distinct privacy consideration compared to standard asynchronous FL. Specifically, by approximating a client’s current update from its past behavior, the server implicitly infers information about how that client’s local objective evolves over time. Although this inferred update is only an approximation and is never shared externally, it may increase the amount of information the server can deduce about an individual client relative to schemes that strictly aggregate received updates only.

Importantly, this privacy concern is limited to the honest-but-curious server threat model and does not expose additional information to other clients or external adversaries. Moreover, \alg does not require storing raw data, labels, or intermediate activations, and the estimated updates are used solely for aggregation and discarded afterward. If stronger privacy guarantees are required, the estimation step can be combined with standard privacy-enhancing techniques such as update clipping, noise injection, or secure aggregation to limit potential information leakage. Therefore, while \alg slightly enlarges the inference capability of the server due to update estimation, it remains compatible with existing privacy-preserving mechanisms and maintains a practical balance between robustness and privacy in asynchronous FL.

\begin{table}[t]
\centering
\caption{Relative estimation error across training rounds on the Fashion-MNIST dataset.}
\label{tab:estimation_error}
\begin{tabular}{|c|c|}
\hline
Round & Relative estimation error\\
\hline
200--300         & 0.81 \\ \hline
300--400         & 0.72 \\ \hline
400--500         & 0.68 \\ \hline
500--600         & 0.65 \\ \hline
600--700         & 0.61 \\ \hline
700--Convergence & 0.59 \\
\hline
\end{tabular}
\end{table}

\begin{table}[t]
\centering
\caption{Running time (in seconds) under varying numbers of clients, where the CNN has 140,000 parameters.}
\label{tab:runtime_clients}
\begin{tabular}{|c|c|c|c|}
\hline
Method & 50 clients & 100 clients & 300 clients \\
\hline
AsyncSGD  & 0.02 & 0.02 & 0.04 \\ \hline
Zeno++    & 0.02 & 0.03 & 0.05 \\ \hline
AFLGuard  & 0.04 & 0.06 & 0.08 \\ \hline
\rowcolor{greyL}
\alg & 0.03 & 0.04 & 0.06 \\
\hline
\end{tabular}
\end{table}

\begin{table}[t]
\centering
\footnotesize
\addtolength{\tabcolsep}{-0.3pt}
\caption{Running time (in seconds) under varying CNN model sizes, where the number of clients is set to 50.}
\label{tab:runtime_modelsize}
\begin{tabular}{|c|c|c|c|}
\hline
Method & 140,000 parameters & 500,000 parameters & 1,000,000 parameters \\
\hline
AsyncSGD  & 0.02 & 0.03 & 0.04 \\ \hline
Zeno++    & 0.02 & 0.04 & 0.06 \\ \hline
AFLGuard  & 0.04 & 0.05 & 0.08 \\ \hline
\rowcolor{greyL}
\alg & 0.03 & 0.04 & 0.06 \\
\hline
\end{tabular}
\end{table}

\myparatight{Server's storage and computational expenses}%
Compared with standard asynchronous FL, \alg introduces additional but well-bounded server-side storage and computational costs. Let $n$ denote the total number of clients, $d$ the model dimension, and $\epsilon$ the L-BFGS buffer size. For storage, the server maintains a limited history of global model differences $\{\Delta \bm{w}_{t-\epsilon}, \ldots, \Delta \bm{w}_{t-1}\}$ and client-specific update differences $\{\Delta \bm{g}^{k}_{t-\epsilon}, \ldots, \Delta \bm{g}^{k}_{t-1}\}$ for each client $k$. Consequently, the total storage overhead is $O(n \epsilon d)$, which scales linearly with the number of clients and the model dimension. Since $\epsilon$ is a small constant in practice (e.g., $\epsilon = 3$ in our experiments), this overhead remains modest and does not increase with the number of training rounds.

In terms of computation, at each training round, the server performs three main operations. First, the Lipschitz-based filtering step computes vector norms and updates the percentile threshold, incurring a cost of $O(d)$ per received update. Second, the update estimation step applies L-BFGS-based Hessian--vector products for the $n-1$ non-uploading clients, resulting in a computational complexity of $O(n \epsilon d)$ per round. This procedure avoids explicit Hessian computation and relies only on vector inner products and linear combinations. Third, the coordinate-wise median aggregation over at most $n$ vectors incurs a cost of $O(n d)$, which is comparable to other Byzantine-robust aggregation rules. Overall, the per-round computational complexity of \alg is $O(n \epsilon d)$, which is higher than vanilla AsyncSGD but remains practical for realistic values of $n$, $d$, and $\epsilon$, representing a reasonable trade-off for improved robustness in adversarial asynchronous FL settings.

Table~\ref{tab:runtime_clients} and Table~\ref{tab:runtime_modelsize} report the running time (in seconds) of representative baselines and \alg on the Fashion-MNIST dataset. Specifically, Table~\ref{tab:runtime_clients} fixes the CNN architecture at 140,000 parameters while varying the number of clients, whereas Table~\ref{tab:runtime_modelsize} fixes the number of clients at 50 while varying the CNN model size. The results show that \alg incurs only modest overhead compared to AsyncSGD and scales gracefully with respect to both the number of clients and the model size. 
In terms of memory consumption, \alg maintains historical buffers and performs second-order approximations via L-BFGS, resulting in a server-side memory usage of 0.456 GB, which remains well within the capacity of standard hardware.


\section{Conclusion}

We present \alg, a Byzantine-resilient framework for asynchronous FL that effectively mitigates poisoning attacks. Our \alg systematically enhances model aggregation by first examining incoming local updates through a smoothness-based criterion, discarding updates that exhibit abrupt deviations from historical behavior. It then reconstructs missing client updates by exploiting prior model dynamics and historical update trajectories to compensate for partial participation. Finally, the server integrates both received and estimated updates using robust aggregation rules, such as the coordinate-wise median, to further limit the influence of adversarial updates. Extensive experimental results across multiple datasets and attack scenarios demonstrate the effectiveness and robustness of \alg in defending against poisoning attacks under fully asynchronous settings.

\begin{acks}
We thank the anonymous reviewers for their comments.
\end{acks}

\bibliographystyle{ACM-Reference-Format}

\bibliography{refs}

\appendices
\balance

\appendix

\begin{figure*}[!t]
	\centering
	\includegraphics[scale = 0.35]{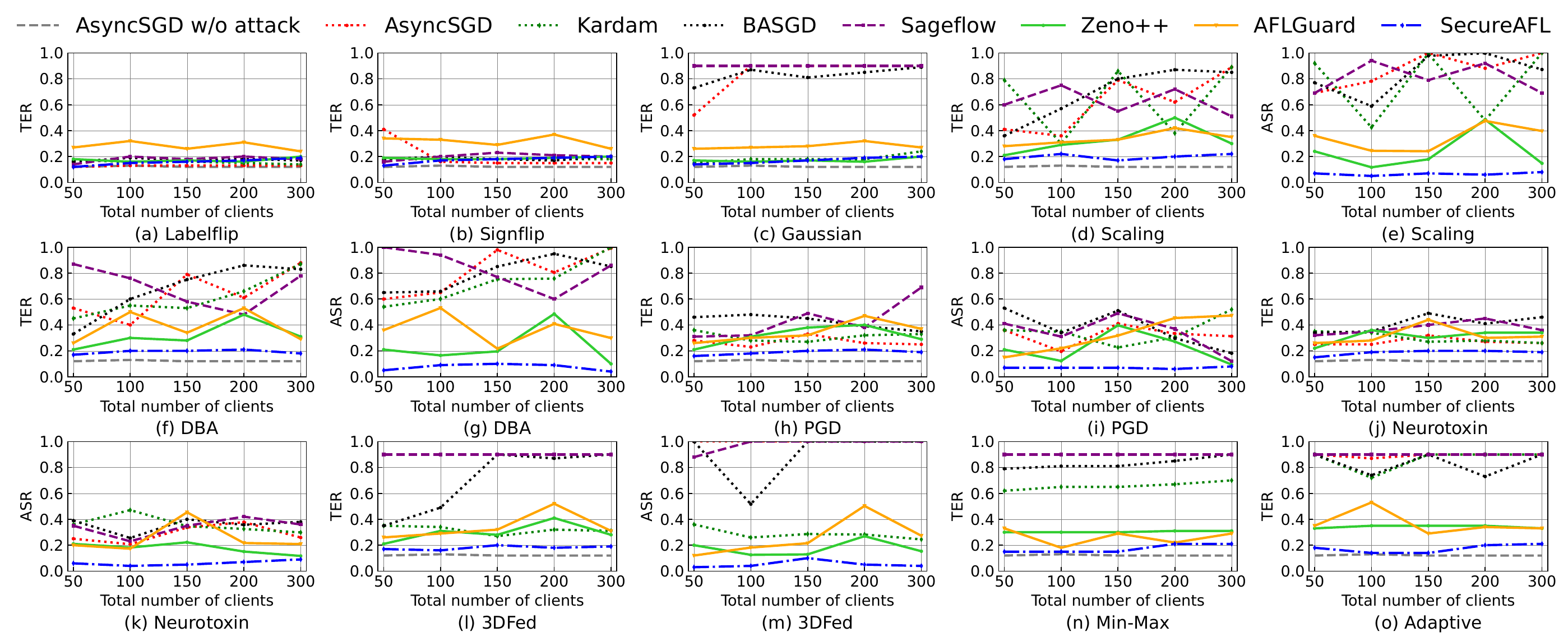}
	\caption{Impact of total number of clients on Fashion-MNIST dataset.}
	\label{number_clients}
\end{figure*}

\begin{table*}[h]
\caption{Performance of AsyncDefender across different datasets and attack types.}
\centering
\addtolength{\tabcolsep}{-1.8pt}
\begin{tabular}{|c|c|c|c|c|c|c|c|c|c|c|c|}
\hline
\text{Dataset} & \text{No attack} & \text{Labelflip} & \text{Signflip} & \text{Gaussian} & \text{Scaling} & \text{DBA} & \text{PGD} & \text{Neurotoxin} & \text{3DFed} & \text{Min-Max} & \text{Adaptive} \\ \hline
\text{Fashion-MNIST}    & 0.24 & 0.25 & 0.33 & 0.31 & 0.38/0.44 & 0.34/0.43 & 0.33/0.29 & 0.32/0.27 & 0.34/0.31 & 0.39 & 0.43 \\ \hline
\text{CIFAR-10}         & 0.26 & 0.40 & 0.42 & 0.49 & 0.58/0.37 & 0.53/0.42 & 0.55/0.44 & 0.52/0.40 & 0.57/0.43 & 0.55 & 0.66 \\ \hline
\text{CIFAR-100}        & 0.50 & 0.58 & 0.57 & 0.66 & 0.73/0.53 & 0.74/0.55 & 0.71/0.67 & 0.73/0.44 & 0.72/0.57 & 0.93 & 0.89 \\ \hline
\text{Tiny-ImageNet}    & 0.63 & 0.71 & 0.75 & 0.71 & 0.70/0.15 & 0.72/0.18 & 0.69/0.12 & 0.70/0.15 & 0.71/0.15 & 0.73 & 0.72 \\ \hline
\text{Udacity}          & 0.18 & --   & 0.41 & 0.32 & --        & --        & --        & --        & --        & 0.29 & 0.33 \\ \hline
\end{tabular}
\label{tab:asyncdefender}
\end{table*}

\section{Important Lemmas}
We first derive a tracking error model of the form
\begin{align}
\bm{g}^t = \nabla F_{\mathcal H}(\bm{w}^t) + \bm{\delta}^t,\quad
\mathbb{E}\bigl[\|\bm{\delta}^t\|^2\bigr]\le E_{\mathrm{track}}^2. 
\end{align}
This is obtained by combining staleness, estimation error, heterogeneity, and median robustness under Byzantine contamination.

\begin{lem}
\label{lem:stale}
Recall that the server model is updated as
\(
\bm{w}^{s+1} = \bm{w}^s - \eta \bm{g}^s,
\)
where $\bm{g}^s$ denotes the aggregated update applied by the server at round $s$. Under Assumptions~\ref{ass:smooth} and~\ref{ass:stale}, for any benign client $j\in\mathcal{H}$,

\begin{align}
\|\nabla f_{j}(\bm{w}^{t-\tau_i}) - \nabla f_{j}(\bm{w}^t)\|
\le L \|\bm{w}^{t-\tau_i}-\bm{w}^t\|
\le L\eta \sum_{s=t-\tau_{\max}}^{t-1}\|\bm{g}^s\|. \label{eq:stale_bound}
\end{align}
Consequently, using Assumption~\ref{ass:filter_bounded} and Jensen's inequality,
\begin{equation}
\begin{aligned}
\mathbb{E}\bigl[\|\nabla f_{j}(\bm{w}^{t-\tau_i}) - \nabla f_{j}(\bm{w}^t)\|^2\bigr]
&\le L^2 \eta^2 \tau_{\max} \sum_{s=t-\tau_{\max}}^{t-1} \mathbb{E}\|\bm{g}^s\|^2\\
&\le L^2 \eta^2 \tau_{\max}^2 G^2. \label{eq:stale_second_moment}
\end{aligned}
\end{equation}

\end{lem}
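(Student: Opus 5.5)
The plan is a direct two-stage chain: first the deterministic bound~\eqref{eq:stale_bound}, then its second-moment version~\eqref{eq:stale_second_moment}.

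\textbf{Stage 1: the deterministic bound.} Assumption~\ref{ass:smooth} immediately gives $\|\nabla f_j(\bm{w}^{t-\tau_i})-\nabla f_j(\bm{w}^t)\|\le L\|\bm{w}^{t-\tau_i}-\bm{w}^t\|$, since $j\in\mathcal{H}$. Next, I would unroll the server recursion $\bm{w}^{s+1}=\bm{w}^s-\eta\bm{g}^s$ from $s=t-\tau_i$ to $s=t-1$, which telescopes to
\[
\bm{w}^t-\bm{w}^{t-\tau_i}=-\eta\sum_{s=t-\tau_i}^{t-1}\bm{g}^s .
\]
The triangle inequality then gives $\|\bm{w}^{t-\tau_i}-\bm{w}^t\|\le\eta\sum_{s=t-\tau_i}^{t-1}\|\bm{g}^s\|$. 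Because $\tau_i\le\tau_{\max}$ (Assumption~\ref{ass:stale}) and every summand is nonnegative, I can enlarge the range to $s=t-\tau_{\max},\dots,t-1$. This yields~\eqref{eq:stale_bound}. For small $t$, I will adopt the convention that indices below $0$ contribute zero.

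\textbf{Stage 2: the second-moment bound.} I square~\eqref{eq:stale_bound} and apply Cauchy--Schwarz (Jensen for the sum of $\tau_{\max}$ terms): $\bigl(\sum_{s}\|\bm{g}^s\|\bigr)^2\le\tau_{\max}\sum_s\|\bm{g}^s\|^2$. Taking expectations then gives the first inequality in~\eqref{eq:stale_second_moment}. For the second inequality, I need $\mathbb{E}\|\bm{g}^s\|^2\le G^2$ for each $s$, after which summing $\tau_{\max}$ terms produces $L^2\eta^2\tau_{\max}^2G^2$.

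\textbf{Main obstacle.} The only nontrivial step is that last bound. Assumption~\ref{ass:filter_bounded} controls the inputs $\bm{v}\in\mathcal{V}_s$, not the median output $\bm{g}^s$. I would bridge this coordinatewise. For each coordinate $r$, the median $g^s_r$ lies between the smallest and largest input values in that coordinate, so
\[
(g^s_r)^2\le\max_{\bm{v}\in\mathcal{V}_s}v_r^2\le\sum_{\bm{v}\in\mathcal{V}_s}v_r^2 .
\]
This crude estimate would, however, cost a factor $|\mathcal{V}_s|\le n$.

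To avoid that factor, I would use the sharper fact that, for an odd count, the median's square is at most the average of the $\lceil|\mathcal{V}_s|/2\rceil$ largest values of $v_r^2$. That bounds it by $2\cdot\frac{1}{|\mathcal{V}_s|}\sum_{\bm{v}\in\mathcal{V}_s} v_r^2$. Summing over $r$ and taking conditional expectations then gives $\mathbb{E}[\|\bm{g}^s\|^2\mid\mathcal{F}_s]\le 2G^2$, and the tower property removes the conditioning.

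The resulting absolute constant $2$ can either be absorbed into $C_{\mathrm{med}}$ downstream or eliminated by reading Assumption~\ref{ass:filter_bounded} as bounding the aggregated update as well. I would note this explicitly, since it is the one place where the stated constant depends on interpretation.
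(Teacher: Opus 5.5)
Your argument is correct, and Stage~1 plus the first inequality of Stage~2 are exactly the paper's proof. Both use $L$-smoothness, the telescoping identity $\bm{w}^t-\bm{w}^{t-\tau_i}=-\eta\sum_{s=t-\tau_i}^{t-1}\bm{g}^s$, the triangle inequality, enlargement of the range to $\tau_{\max}$ terms, and $(\sum_s a_s)^2\le\tau_{\max}\sum_s a_s^2$.

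You differ only at the last inequality in \eqref{eq:stale_second_moment}. The paper simply cites \eqref{eq:input_norm_bound} to get $\mathbb{E}\|\bm{g}^s\|^2\le G^2$, tacitly treating the aggregated update as one of the vectors covered by Assumption~\ref{ass:filter_bounded}. You instead note that the assumption only controls the \emph{inputs} of the median, and you add a bridging lemma: the median's squared norm is at most twice the average input squared norm.

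That extra step is justified. Take the inputs $(1,1,0)$, $(1,0,1)$, $(0,1,1)$, each of squared norm $2$. Their coordinate-wise median is $(1,1,1)$, with squared norm $3$. More generally, take $2k+1$ zero--one vectors whose coordinates are indexed by the $(k+1)$-subsets of $[2k+1]$. Their median is the all-ones vector, and the ratio of its squared norm to each input's is $(2k+1)/(k+1)\to 2$. So under the literal reading of the assumption, the paper's constant does not follow, and your factor $2$ is essentially sharp for that step. It costs nothing downstream, since it is absorbed into $C_{\mathrm{med}}$ in Lemma~\ref{lem:median_tracking}. Under the paper's implicit broader reading, your proof reduces to the paper's.

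Two small touch-ups.
\begin{itemize}
\item Since $|\mathcal{V}_s|\in\{n-1,n\}$, one of the two cases has an even count, so the odd-count argument alone does not suffice. The same bound holds for even counts: if the median $m$ of $x_1\le\dots\le x_N$ is nonnegative, at least $N/2$ entries are $\ge m$, and symmetrically if $m<0$. This gives $m^2\le\frac{2}{N}\sum_j x_j^2$ for every $N$.
\item $|\mathcal{V}_s|$ depends on the filter's random accept/reject decision. You should therefore bound $1/|\mathcal{V}_s|\le 1/(n-1)$ before taking the conditional expectation. This only changes the constant $2$ into $2n/(n-1)$.
\end{itemize}
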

\begin{proof}
The first inequality follows from $L$-smoothness. For the second, use telescoping:
\begin{align}
\bm{w}^{t}-\bm{w}^{t-\tau_i}
=
\sum_{s=t-\tau_i}^{t-1}(\bm{w}^{s+1}-\bm{w}^s)
=
-\eta\sum_{s=t-\tau_i}^{t-1}\bm{g}^s,
\end{align}
hence
\begin{align}
\|\bm{w}^{t}-\bm{w}^{t-\tau_i}\|
\le \eta\sum_{s=t-\tau_i}^{t-1}\|\bm{g}^s\|
\le \eta\sum_{s=t-\tau_{\max}}^{t-1}\|\bm{g}^s\|.
\end{align}
Squaring and using $(\sum_{i=1}^k a_i)^2\le k\sum_{i=1}^k a_i^2$ yields the first inequality in Eq.~(\ref{eq:stale_second_moment}). The last inequality uses Eq.~(\ref{eq:input_norm_bound}).
\end{proof}
At round $t$, define for each benign client $j\in\mathcal{H}$ the contemporaneous gradient
\begin{align}
\bm{h}_j^t = \nabla f_j(\bm{w}^t). 
\end{align}

\begin{align}
\bm{u}_j^t =
\begin{cases}
\bm{g}_{i}^{t-\tau_i}, & \text{if } j=i \text{ and the filter accepts},\\
\hat{\bm{g}}_j^t, & \text{if } j\in\mathcal{S}.
\end{cases}
\label{eq:def_ukt}
\end{align}
\begin{lem}
\label{lem:benign_vector_error}
Under Assumptions~\ref{ass:benign_upload},~\ref{ass:est}, and Lemma~\ref{lem:stale}, for any round $t$,
\begin{equation}
\begin{aligned}
  &\mathbb{E}\bigl[\|\bm{u}_j^t - \bm{h}_j^t\|^2 \mid \mathcal{F}_t\bigr]\\
&\le
\begin{cases}
2\sigma^2 + 2L^2\eta^2\tau_{\max}^2 G^2, & \text{if } j=i\in\mathcal{H} \text{ and accepted},\\[3pt]
\varepsilon_{\mathrm{est}}^2, & \text{if } j\in\mathcal{H}\cap\mathcal{S}.
\end{cases}
\label{eq:benign_contrib_bound}    
\end{aligned}
\end{equation}

\end{lem}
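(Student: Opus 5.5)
The plan is to treat the two cases in the definition of $\bm{u}_j^t$ separately. The second case is immediate from the assumptions, and the first case reduces to a two-term decomposition controlled by the inequality $\|\bm{a}+\bm{b}\|^2\le 2\|\bm{a}\|^2+2\|\bm{b}\|^2$.

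\emph{Case $j\in\mathcal{H}\cap\mathcal{S}$.} Here $\bm{u}_j^t=\hat{\bm{g}}_j^t$ and $\bm{h}_j^t=\nabla f_j(\bm{w}^t)$. Since $j\neq i$, Assumption~\ref{ass:est} applies verbatim and gives the bound $\varepsilon_{\mathrm{est}}^2$. Nothing further is needed.

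\emph{Case $j=i\in\mathcal{H}$, accepted.} By Assumption~\ref{ass:benign_upload}, I write
\[
\bm{u}_i^t-\bm{h}_i^t=\bigl(\nabla f_i(\bm{w}^{t-\tau_i})-\nabla f_i(\bm{w}^t)\bigr)+\bm{\xi}_{i,t}.
\]
Applying the two-term inequality bounds the conditional second moment by $2\,\mathbb{E}[\|\bm{\xi}_{i,t}\|^2\mid\mathcal{F}_t]+2\,\mathbb{E}[\|\nabla f_i(\bm{w}^{t-\tau_i})-\nabla f_i(\bm{w}^t)\|^2\mid\mathcal{F}_t]$. The first term is at most $2\sigma^2$ by Assumption~\ref{ass:benign_upload}. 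For the second term I invoke Lemma~\ref{lem:stale}, which gives $2L^2\eta^2\tau_{\max}^2G^2$.

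Using the cruder two-term inequality avoids any need to argue that the cross term vanishes. Such an argument would need $\bm{\xi}_{i,t}$ to be conditionally mean-zero given quantities that are $\mathcal{F}_t$-measurable, and that requires care, since $\bm{w}^{t-\tau_i}$ and $\bm{w}^t$ are both $\mathcal{F}_t$-measurable, so the cross term would in fact vanish. The factor $2$ on $\sigma^2$ is consistent with the stated bound, so the crude route is acceptable.

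The main subtlety is the staleness term. Lemma~\ref{lem:stale} is stated as an unconditional expectation, whereas the claim is conditional on $\mathcal{F}_t$. I would handle this by noting that $\bm{w}^{t-\tau_i}-\bm{w}^t=\eta\sum_{s}\bm{g}^s$ is $\mathcal{F}_t$-measurable. The staleness term is therefore deterministic given $\mathcal{F}_t$, and its bound comes from the pathwise inequality in Eq.~(\ref{eq:stale_bound}) together with the Jensen step. To make this rigorous, I would read Assumption~\ref{ass:filter_bounded}, through the clipping rule, as giving $\|\bm{g}^s\|\le G$ for the aggregated updates, which holds because the median of coordinates drawn from bounded inputs is bounded. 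Failing that, I would interpret the conditional bound in the same in-expectation sense used by Lemma~\ref{lem:stale}. Either reading yields $L^2\eta^2\tau_{\max}^2G^2$, and combining it with the noise term gives the first case of Eq.~(\ref{eq:benign_contrib_bound}).
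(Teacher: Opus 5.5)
Your proof is the same as the paper's: the same two-case split, the same decomposition of $\bm{u}_i^t-\bm{h}_i^t$ into a staleness term plus $\bm{\xi}_{i,t}$, the crude $2\|\bm{a}\|^2+2\|\bm{b}\|^2$ bound, and then Assumption~\ref{ass:benign_upload} and Lemma~\ref{lem:stale}, with the paper applying Lemma~\ref{lem:stale}'s unconditional bound inside the conditional expectation without comment, which is your reading (b). Drop your reading (a), though, because it does not give $\|\bm{g}^s\|\le G$ pathwise: Assumption~\ref{ass:filter_bounded} only bounds conditional second moments, clipping is applied only at $t=0$, and even inputs with $\|\bm{v}\|\le G$ can have a coordinate-wise median of norm larger than $G$ (only $\sqrt{2}G$ is guaranteed; e.g.\ $(a,a,0)$, $(a,0,a)$, $(0,a,a)$ with $a=G/\sqrt{2}$ have median $(a,a,a)$ of norm $\sqrt{3/2}\,G$).
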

\begin{proof}
If $j\in\mathcal{H}\cap\mathcal{S}$, then $\bm{u}_j^t=\hat{\bm{g}}_j^t$ and Eq.~(\ref{eq:benign_contrib_bound}) follows from Assumption~\ref{ass:est}.
If $j=i\in\mathcal{H}$ and accepted, then
\begin{align}
\bm{u}_j^t - \bm{h}_j^t = \bigl(\nabla f_j(\bm{w}^{t-\tau_i})-\nabla f_j(\bm{w}^t)\bigr) + \bm{\xi}_{i,t}.
\end{align}

Using $\|a+b\|^2\le 2\|a\|^2+2\|b\|^2$  taking conditional expectation, and applying Lemma~\ref{lem:stale} and Assumption~\ref{ass:benign_upload},
\begin{equation}
\begin{aligned}
\mathbb{E}\bigl[\|\bm{u}_j^t-\bm{h}_j^t\|^2\mid \mathcal{F}_t\bigr]
\le
&2\mathbb{E}\bigl[\|\nabla f_j(\bm{w}^{t-\tau_i})-\nabla f_j(\bm{w}^t)\|^2\mid \mathcal{F}_t\bigr]
\\
&+
2\mathbb{E}\bigl[\|\bm{\xi}_{i,t}\|^2\mid \mathcal{F}_t\bigr].
\end{aligned}
\end{equation}
\end{proof}

\begin{lem}[Median aggregation as a bounded tracking error]
\label{lem:median_tracking}
Suppose Assumptions~\ref{ass:byz_fraction},~\ref{ass:filter_bounded}, and~\ref{ass:heterogeneity} hold.
Let
\begin{align}
\bm{g}^t = \mathrm{Median}(\mathcal{V}_t).
\end{align}
For every client $j\in[n]$, the possibly estimated vector that is actually fed into the median at round $t$ as
\begin{align}
\bm{u}_j^t \in \mathcal{V}_t,\quad \mathcal{V}_t=\{\bm{u}_j^t\}_{j\in[n]},
\end{align}
so that at most $b$ elements of $\mathcal{V}_t$ are Byzantine and the remaining correspond to benign clients in $\mathcal{H}$.
Let
\begin{align}
\bar{\bm{h}}^t = \nabla F_{\mathcal{H}}(\bm{w}^t)=\frac{1}{m}\sum_{j\in\mathcal{H}}\nabla f_j(\bm{w}^t),
\quad
\bm{h}_j^t = \nabla f_j(\bm{w}^t).
\end{align}
Assume additionally the following median second-moment robustness property holds for the coordinate-wise median: there exists a constant $C_{\mathrm{med}}>0$ (depending only on the Byzantine fraction bound) such that, for all $t$,
\begin{align}
\mathbb{E}\bigl[\|\mathrm{Median}(\mathcal{V}_t)-\bar{\bm{h}}^t\|^2 \mid \mathcal{F}_t\bigr]
\le
\frac{C_{\mathrm{med}}}{m}\sum_{j\in\mathcal{H}}\mathbb{E}\bigl[\|\bm{u}_j^t-\bar{\bm{h}}^t\|^2\mid \mathcal{F}_t\bigr].
\label{eq:median_variance_bound}
\end{align}
Then there exists a random vector $\bm{\delta}^t$ such that
\begin{align}
\bm{g}^t = \nabla F_{\mathcal{H}}(\bm{w}^t) + \bm{\delta}^t,\quad
\mathbb{E}\bigl[\|\bm{\delta}^t\|^2\bigr]\le E_{\mathrm{track}}^2, \label{eq:tracking_form}
\end{align}
where
\begin{align}
E_{\mathrm{track}}^2
\le
C_{\mathrm{med}}\Bigl(
\zeta^2
+\varepsilon_{\mathrm{est}}^2
+\sigma^2
+L^2\eta^2\tau_{\max}^2 G^2
\Bigr).
\label{eq:Etrack_explicit}
\end{align}
\end{lem}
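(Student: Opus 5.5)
Define $\bm{\delta}^t := \bm{g}^t - \nabla F_{\mathcal H}(\bm{w}^t) = \mathrm{Median}(\mathcal{V}_t) - \bar{\bm{h}}^t$. The existence claim in Eq.~(\ref{eq:tracking_form}) is then immediate, and everything reduces to bounding $\mathbb{E}\|\bm{\delta}^t\|^2$.

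I would first apply the assumed median robustness property Eq.~(\ref{eq:median_variance_bound}). This reduces the problem to bounding, for each benign $j\in\mathcal{H}$, the conditional second moment $\mathbb{E}[\|\bm{u}_j^t-\bar{\bm{h}}^t\|^2\mid\mathcal{F}_t]$. Assumption~\ref{ass:byz_fraction} (benign majority in $\mathcal{V}_t$) and the bounded-input condition of Assumption~\ref{ass:filter_bounded} are what justify invoking that property. Byzantine vectors never appear on the right-hand side, so their arbitrary behavior plays no further role.

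For each benign $j$, I would split
\[
\bm{u}_j^t-\bar{\bm{h}}^t = (\bm{u}_j^t-\bm{h}_j^t) + (\bm{h}_j^t-\bar{\bm{h}}^t)
\]
and use $\|a+b\|^2\le 2\|a\|^2+2\|b\|^2$. The first piece is controlled by Lemma~\ref{lem:benign_vector_error}:
\begin{itemize}
\item at most $\varepsilon_{\mathrm{est}}^2$ for benign non-uploading clients;
\item at most $2\sigma^2+2L^2\eta^2\tau_{\max}^2G^2$ for the single accepted benign uploader $i$.
\end{itemize}
If $i$ is benign but rejected, or Byzantine, then $\bm{u}_i^t$ is absent or adversarial and is simply excluded from the benign sum. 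In every case the first piece is bounded by $\varepsilon_{\mathrm{est}}^2+2\sigma^2+2L^2\eta^2\tau_{\max}^2G^2$. Averaging the second piece over $j\in\mathcal{H}$ gives exactly the left side of Assumption~\ref{ass:heterogeneity}, so it is at most $\zeta^2$ since $\bm{w}^t$ is $\mathcal{F}_t$-measurable.

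Combining these, then taking total expectation via the tower property, gives
\[
\mathbb{E}\|\bm{\delta}^t\|^2\le 4C_{\mathrm{med}}\bigl(\zeta^2+\varepsilon_{\mathrm{est}}^2+\sigma^2+L^2\eta^2\tau_{\max}^2G^2\bigr).
\]
The numerical factor $4$ is absorbed by redefining $C_{\mathrm{med}}$, which is permitted since it is only an absolute constant. This yields Eq.~(\ref{eq:Etrack_explicit}).

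The main obstacle is bookkeeping rather than analysis. First, one must check that the normalization $1/m$ in Eq.~(\ref{eq:median_variance_bound}) matches the summation, since the benign set fed into the median may have $m$ or $m-1$ members when the upload is rejected. I would handle this by bounding with $\tfrac{1}{m}\sum_{j\in\mathcal{H}}$ over available benign vectors, noting that a missing term only decreases the sum. Second, the staleness bound in Lemma~\ref{lem:stale} is stated in unconditional expectation, while it is used conditionally here. I would sidestep this by applying the tower property before summing, so that only unconditional bounds are needed in the final inequality.
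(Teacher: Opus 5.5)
Your proposal is correct and follows essentially the same route as the paper. Both define $\bm{\delta}^t$ as the median minus $\bar{\bm{h}}^t$, apply the assumed median second-moment property, and split each benign term around $\bm{h}_j^t$. The pieces are then bounded via Lemma~\ref{lem:benign_vector_error} and Assumption~\ref{ass:heterogeneity}, and the resulting factor of $4$ is absorbed into $C_{\mathrm{med}}$.

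Your extra bookkeeping on the rejected-upload case and on conditional versus unconditional staleness bounds addresses points the paper glosses over, and the fixes are sound: drop the missing term, and take total expectation before invoking Lemma~\ref{lem:stale}.
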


\begin{proof}
Fix $t$ and condition on $\mathcal{F}_t$. Let $\bar{\bm{h}}^t = \nabla F_{\mathcal{H}}(\bm{w}^t)$.
Define the tracking error $\bm{\delta}^t = \bm{g}^t-\bar{\bm{h}}^t$, so that $\bm{g}^t=\bar{\bm{h}}^t+\bm{\delta}^t$.
By the assumed median second-moment robustness property Eq.~\eqref{eq:median_variance_bound},
\begin{equation}
\begin{aligned}
\mathbb{E}\bigl[\|\bm{\delta}^t\|^2 \mid \mathcal{F}_t\bigr]
&=
\mathbb{E}\bigl[\|\bm{g}^t-\bar{\bm{h}}^t\|^2 \mid \mathcal{F}_t\bigr]\\
&\le
\frac{C_{\mathrm{med}}}{m}\sum_{j\in\mathcal{H}}\mathbb{E}\bigl[\|\bm{u}_j^t-\bar{\bm{h}}^t\|^2\mid \mathcal{F}_t\bigr].
\label{eq:med_step1}
\end{aligned}
\end{equation}

For each benign client $j\in\mathcal{H}$, add and subtract $\bm{h}_j^t$ and use $\|a+b\|^2\le 2\|a\|^2+2\|b\|^2$:
\begin{equation}
\begin{aligned}
\|\bm{u}_j^t-\bar{\bm{h}}^t\|^2
&=
\|\bm{u}_j^t-\bm{h}_j^t+\bm{h}_j^t-\bar{\bm{h}}^t\|^2\\
&\le
2\|\bm{u}_j^t-\bm{h}_j^t\|^2
+
2\|\bm{h}_j^t-\bar{\bm{h}}^t\|^2.
\label{eq:decomp_uk}
\end{aligned}
\end{equation}

Substituting Eq.~\eqref{eq:decomp_uk} into Eq.~\eqref{eq:med_step1} yields
\begin{equation}
\begin{aligned}
\mathbb{E}\bigl[\|\bm{\delta}^t\|^2 \mid \mathcal{F}_t\bigr]
\le&
\frac{2C_{\mathrm{med}}}{m}\sum_{j\in\mathcal{H}}
\mathbb{E}\bigl[\|\bm{u}_j^t-\bm{h}_j^t\|^2\mid \mathcal{F}_t\bigr]\\
&+
\frac{2C_{\mathrm{med}}}{m}\sum_{j\in\mathcal{H}}
\|\bm{h}_j^t-\bar{\bm{h}}^t\|^2.
\label{eq:med_step2}
\end{aligned}
\end{equation}

By Assumption~\ref{ass:heterogeneity}, the second term satisfies
\begin{align}
\frac{1}{m}\sum_{j\in\mathcal{H}}
\|\bm{h}_j^t-\bar{\bm{h}}^t\|^2
\le
\zeta^2.
\label{eq:hetero_use}
\end{align}
For the first term, Lemma~\ref{lem:benign_vector_error} gives, for each benign $j$ at round $t$,
\begin{align}
\mathbb{E}\bigl[\|\bm{u}_j^t-\bm{h}_j^t\|^2\mid \mathcal{F}_t\bigr]
\le
\varepsilon_{\mathrm{est}}^2
\quad\text{or}\quad
2\sigma^2+2L^2\eta^2\tau_{\max}^2G^2,
\end{align}
depending on whether $\bm{u}_j^t$ is an estimator output or an accepted stale upload.
Therefore, in all cases,
\begin{align}
\frac{1}{m}\sum_{j\in\mathcal{H}}
\mathbb{E}\bigl[\|\bm{u}_j^t-\bm{h}_j^t\|^2\mid \mathcal{F}_t\bigr]
\le
\varepsilon_{\mathrm{est}}^2
+
2\sigma^2
+
2L^2\eta^2\tau_{\max}^2G^2.
\label{eq:surrogate_avg_bound}
\end{align}
Combining Eq.~\eqref{eq:med_step2}, Eq.~\eqref{eq:hetero_use}, and Eq.~\eqref{eq:surrogate_avg_bound} and absorbing constant factors into $C_{\mathrm{med}}$ yields
\begin{align}
\mathbb{E}\bigl[\|\bm{\delta}^t\|^2 \mid \mathcal{F}_t\bigr]
\le
C_{\mathrm{med}}\Bigl(
\zeta^2+\varepsilon_{\mathrm{est}}^2+\sigma^2+L^2\eta^2\tau_{\max}^2G^2
\Bigr).
\end{align}
Taking total expectation over $\mathcal{F}_t$ proves Eq.~\eqref{eq:tracking_form}--\eqref{eq:Etrack_explicit}.
\end{proof}

\section{Proofs for Theorem ~\ref{thm:main}} \label{sec:appendix_main}
\begin{proof}
Now, with all assumptions and the important lemmas we have mentioned above. We now illustrate the detailed proof for theorem \ref{thm:main}.
First, By $L$-smoothness of $F_{\mathcal H}$ (Assumption~\ref{ass:smooth}), for the update $\bm{w}^{t+1}=\bm{w}^t-\eta \bm{g}^t$, we have,
\begin{align}
F_{\mathcal H}(\bm{w}^{t+1})
\le
F_{\mathcal H}(\bm{w}^t)
-\eta\langle \nabla F_{\mathcal H}(\bm{w}^t),\bm{g}^t\rangle
+\frac{L\eta^2}{2}\|\bm{g}^t\|^2.
\label{eq:descent}
\end{align}
Then, by Lemma~\ref{lem:median_tracking}, $\bm{g}^t=\nabla F_{\mathcal H}(\bm{w}^t)+\bm{\delta}^t$ with $\mathbb{E}\|\bm{\delta}^t\|^2\le E_{\mathrm{track}}^2$. We get,
\begin{align}
-\langle \nabla F_{\mathcal H}(\bm{w}^t),\bm{g}^t\rangle
=
-\|\nabla F_{\mathcal H}(\bm{w}^t)\|^2-\langle \nabla F_{\mathcal H}(\bm{w}^t),\bm{\delta}^t\rangle.
\end{align}
By Cauchy--Schwarz and Young's inequality, $\langle a,b\rangle \le \frac{1}{2}\|a\|^2+\frac{1}{2}\|b\|^2$, hence
\begin{align}
-\langle \nabla F_{\mathcal H}(\bm{w}^t),\bm{g}^t\rangle
\le
-\frac{1}{2}\|\nabla F_{\mathcal H}(\bm{w}^t)\|^2+\frac{1}{2}\|\bm{\delta}^t\|^2.
\label{eq:inner_bound}
\end{align}
Therefore, we have
\begin{align}
\|\bm{g}^t\|^2 = \|\nabla F_{\mathcal H}(\bm{w}^t)+\bm{\delta}^t\|^2
\le 2\|\nabla F_{\mathcal H}(\bm{w}^t)\|^2+2\|\bm{\delta}^t\|^2.
\end{align}
Substitute this and Eq.~(\ref{eq:inner_bound}) into Eq.~(\ref{eq:descent}):
\begin{equation}
\begin{aligned}
F_{\mathcal H}(\bm{w}^{t+1})
\le&
F_{\mathcal H}(\bm{w}^t)
+\eta\Bigl(-\tfrac{1}{2}\|\nabla F_{\mathcal H}(\bm{w}^t)\|^2+\tfrac{1}{2}\|\bm{\delta}^t\|^2\Bigr)\\
&+\frac{L\eta^2}{2}\Bigl(2\|\nabla F_{\mathcal H}(\bm{w}^t)\|^2+2\|\bm{\delta}^t\|^2\Bigr)\\
=&
F_{\mathcal H}(\bm{w}^t)
-\eta\Bigl(\tfrac{1}{2}-L\eta\Bigr)\|\nabla F_{\mathcal H}(\bm{w}^t)\|^2
+\eta\Bigl(\tfrac{1}{2}+L\eta\Bigr)\|\bm{\delta}^t\|^2.
\end{aligned}
\end{equation}

Under $\eta\le \frac{1}{4L}$, we have $\frac{1}{2}-L\eta \ge \frac{1}{4}$ and $\frac{1}{2}+L\eta \le 1$, hence
\begin{align}
F_{\mathcal H}(\bm{w}^{t+1})\le F_{\mathcal H}(\bm{w}^t)-\frac{\eta}{4}\|\nabla F_{\mathcal H}(\bm{w}^t)\|^2 + \eta\|\bm{\delta}^t\|^2.
\label{eq:descent_simple}
\end{align}
Take expectations and apply $\mathbb{E}\|\bm{\delta}^t\|^2\le E_{\mathrm{track}}^2$:
\begin{align}
\mathbb{E}F_{\mathcal H}(\bm{w}^{t+1})
\le
\mathbb{E}F_{\mathcal H}(\bm{w}^t)-\frac{\eta}{4}\mathbb{E}\|\nabla F_{\mathcal H}(\bm{w}^t)\|^2+\eta E_{\mathrm{track}}^2.
\end{align}
Rearrange, sum $t=0$ to $T-1$, and use $F_{\mathcal H}(\bm{w}^T)\ge F_{\mathcal H}{\star}$:
\begin{align}
\frac{\eta}{4}\sum_{t=0}^{T-1}\mathbb{E}\|\nabla F_{\mathcal H}(\bm{w}^t)\|^2
\le
F_{\mathcal H}(\bm{w}^0)-F_{\mathcal H}{\star} + T\eta E_{\mathrm{track}}^2.
\end{align}
Divide both sides by $\eta T/4$ to obtain Eq.~(\ref{eq:main_rate}).
\end{proof}

\section{Details of Datasets}
\label{app_dataset}

\myparatight{Fashion-MNIST~\cite{xiao2017online}}%
Fashion-MNIST, derived from Zalando's article images, comprises 70,000 grayscale images with a resolution of 28×28 pixels, categorized into 10 distinct classes. The dataset is divided into 60,000 images for training and 10,000 for testing.

\myparatight{CIFAR-10~\cite{cifar10data}}%
CIFAR-10 consists of 60,000 colored images, each with a resolution of 32×32 pixels, categorized into 10 distinct classes. The dataset is structured into 50,000 images for training and 10,000 for testing.

\myparatight{CIFAR-100~\cite{cifar10data}}%
CIFAR-100 follows the same format as CIFAR-10 but features a more fine-grained classification, comprising 60,000 color images of 32×32 pixels. These images are distributed across 100 distinct classes, which are further organized into 20 broader categories. The dataset includes 50,000 images for training and 10,000 for testing.

\myparatight{Tiny-ImageNet~\cite{deng2009imagenet}}%
Tiny-ImageNet is a compact variant of the ImageNet dataset tailored for large-scale image recognition. It comprises 110,000 images spanning 200 classes, with 100,000 designated for training and 10,000 set aside for testing.

\myparatight{Udacity~\cite{Udacity}}%
The Udacity dataset is a dataset for regression tasks.
It supports autonomous driving research by enabling the prediction of a vehicle’s steering angle within a simulated environment provided by Udacity. It comprises images recorded from the onboard camera during human-driven demonstrations. Leveraging this data, a model is trained to infer steering angles, with its performance ultimately assessed on an unseen test track.

\section{Details of Poisoning Attacks}
\label{app_attack}

\myparatight{Label flipping (Labelflip) attack~\cite{tolpegin2020data}}%
The label flipping attack alters the training labels of malicious clients by transforming each label \( y \) into \( z - 1 - y \), where \( z \) represents the total number of classes.

\myparatight{Signflip attack~\cite{fang2020local}}%
The Signflip attack disrupts model updates by having malicious clients invert the sign of every element in their update vector. This is accomplished by multiplying the entire vector by -1 before submission.

\myparatight{Gaussian attack~\cite{blanchard2017machine}}%
The Gaussian attack involves malicious clients fabricating model updates by sampling from a Gaussian distribution with a mean of zero and a standard deviation of 200.

\myparatight{Scaling attack~\cite{bagdasaryan2020backdoor}}%
In a Scaling attack, the attacker embeds distinct trigger patterns into a portion of the training data belonging to malicious clients, ensuring these triggers correspond to a predefined target label. Additionally, malicious clients amplify their local model updates before transmitting them to the server.

\myparatight{DBA attack~\cite{xie2019dba}}%
The DBA attack takes advantage of the decentralized structure of FL by fragmenting a global trigger pattern into unique local patterns. These patterns are then systematically injected into the training data of malicious clients.

\myparatight{Projected gradient descent (PGD) attack~\cite{sun2019can}}%
The attacker can employ projected gradient descent to train the backdoor model, ensuring that in each training round, the model is updated and then constrained within an \(\ell_2\) ball centered around the previous round’s model.

\myparatight{Neurotoxin attack~\cite{zhang2022neurotoxin}}%
The Neurotoxin attack strengthens backdoor persistence in FL models by targeting parameters that change minimally during training, ensuring the backdoor remains effective despite continual updates.

\myparatight{3DFed attack~\cite{li20233dfed}}%
3DFed attack is a stealthy, multi-layered framework for backdoor attacks in black-box FL systems. It employs constrained-loss training, noise masking, and a decoy model to evade detection.

\myparatight{Min-Max attack~\cite{shejwalkar2021manipulating}}%
Min-Max is an untargeted attack independent of aggregation rules, where the attacker stealthily manipulates updates from malicious clients.

\myparatight{Adaptive attack~\cite{shejwalkar2021manipulating}}%
In the Adaptive attack, the attacker, aware of the server's \alg defense, manipulates malicious client updates to maximize deviation from the pre-attack aggregated model.

\section{Details of Compared Methods}
\label{app_method}

\myparatight{AsyncSGD~\cite{zheng2017asynchronous}}%
It updates the global model immediately whenever a client submits a model update to the server.

\myparatight{Kardam~\cite{damaskinos2018asynchronous}}%
Kardam considers an update as malicious if it exhibits a substantial deviation from past updates.

\myparatight{BASGD~\cite{yang2021basgd}}%
BASGD organizes client updates into buffers based on a mapping table. Once filled, the server averages each buffer, computes the median of these averages, and updates the global model accordingly.

\myparatight{Sageflow~\cite{park2021sageflow}}%
Sageflow uses a trusted dataset to assess the quality of client updates. It mitigates the impact of stragglers through staleness-aware grouping and enhances robustness against adversarial attacks using entropy-based filtering and loss-weighted averaging.

\myparatight{Zeno++~\cite{xie2020zeno++}}%
Zeno++ verifies client updates using a trusted dataset. The server computes its own update and assesses its alignment with the client's update via cosine similarity. If the similarity is positive, the client's update is rescaled before being applied to the global model.

\myparatight{AFLGuard~\cite{fang2022aflguard}}%
AFLGuard also relies on a trusted dataset to generate a reference update. A received model is deemed benign if it aligns positively with this reference.

\end{document}